\documentclass[11pt]{article}
\usepackage[utf8]{inputenc}
\usepackage[left=1in,top=1in,right=1in,bottom=1in,head=.1in]{geometry}

\usepackage{mystyle}

\newcommand{\dom}{\mathrm{dom}}
\newcommand{\Prox}{\mathrm{prox}}
\newcommand{\Gap}{\mathrm{Gap}}

\usepackage{xspace}
\usepackage{xcolor}
\newcommand{\gray}[1]{{\color{gray}#1}}
\usepackage[font=small,labelfont=bf]{caption}
\usepackage{tabularx}
\newcolumntype{Y}{>{\centering\arraybackslash}X}
\newcolumntype{P}{>{\raggedleft\arraybackslash}X}
\usepackage{multirow}

\usepackage{algorithm}
\usepackage{algorithmic}

\newcommand{\beginsupplement}{
    \setcounter{section}{0}
    \renewcommand{\thesection}{S\arabic{section}}
    \setcounter{equation}{0}
    \renewcommand{\theequation}{S\arabic{equation}}
    \setcounter{table}{0}
    \renewcommand{\thetable}{S\arabic{table}}
    \setcounter{figure}{0}
    \renewcommand{\thefigure}{S\arabic{figure}}
    \newcounter{SIfig}
    \renewcommand{\theSIfig}{S\arabic{SIfig}}}

\usepackage{authblk}
\newcommand*\samethanks[1][\value{footnote}]{\footnotemark[#1]}

\title{\bf High Dimensional Portfolio Selection\\with Cardinality Constraints }
\author[$\dagger$]{Jin-Hong Du\thanks{equal contribution}}
\author[$\ddagger$]{Yifeng Guo\samethanks[1]}
\author[$\S$]{Xueqin Wang}
\affil[$\S$]{School of Management, University of Science and Technology of China}
\affil[$\ddagger$]{Department of Statistics and Actuarial Science, The University of Hong Kong}
\affil[$\dagger$]{Department of Statistics and Data Science, Carnegie Mellon University}

\begin{document}

\maketitle

\begin{abstract}

The expanding number of assets offers more opportunities for investors but poses new challenges for modern portfolio management (PM). As a central plank of PM, portfolio selection by expected utility maximization (EUM) faces uncontrollable estimation and optimization errors in ultrahigh-dimensional scenarios. Past strategies for high-dimensional PM mainly concern only large-cap companies and select many stocks, making PM impractical. We propose a sample-average approximation-based portfolio strategy to tackle the difficulties above with cardinality constraints. Our strategy bypasses the estimation of mean and covariance, the Chinese walls in high-dimensional scenarios. Empirical results on S\&P 500 and Russell 2000 show that an appropriate number of carefully chosen assets leads to better out-of-sample mean-variance efficiency. On Russell 2000, our best portfolio profits as much as the equally-weighted portfolio but reduces the maximum drawdown and the average number of assets by 10\% and 90\%,  respectively. 
The flexibility and the stability of incorporating factor signals for augmenting out-of-sample performances are also demonstrated.
Our strategy balances the trade-off among the return, the risk, and the number of assets with cardinality constraints. Therefore, we provide a theoretically sound and computationally efficient strategy to make PM practical in the growing global financial market.
\end{abstract}

\noindent%
{\it Keywords:} Portfolio management, Expected utility maximization, Sample average approximation, Fenchel-Rockafellar duality, Safe screening

\section{Introduction}

    In modern portfolio management (PM), the number of assets grows with economic development and globalization. Taking the US as an example, there are currently more than 3500 stocks with a capitalization of over 48 trillion dollars, not to mention the financial assets across different countries in global indices. Therefore, the investors face considerate challenges in building feasible strategies to gain substantial returns with controllable risks.
    Nevertheless, the traditional PM approaches cannot handle such high dimensional problems as the number of assets nowadays can be in the same order or even more extensive than the sample size, bringing both theoretical and empirical challenges to PM.

    As a widely used portfolio strategy, the mean-variance (MV) framework introduced by Markowitz in 1952 \citep{Mark1952} aims at approximating the expected utility maximization (EUM) problems, which laid the foundation of modern portfolio theory.
    The modern portfolio theory emphasizes the diversified portfolio construction with a trade-off between return and risk.
    Although MV strategy provides reasonable estimates of expected utility in his recent paper \citep{markowitz2014mean}, its limitations are also documented in lots of literature, see \cite{michaud1989markowitz,best1991sensitivity,kan2007optimal,ao2019approaching}.
    More specifically, as the number of assets increases, the resulting ``plug-in'' portfolio is susceptible to the estimation error and input uncertainty during the optimization process \citep{fabozzi2007robust} since the expected mean and covariance are hard to be estimated from samples in the high-dimensional scenarios. In contrast, directly maximizing expected utility needs the exact information of distribution which is more intractable in practice.

    In addition to the above theoretical challenges, the computational challenges hinder the exploration of large datasets.
    As the S\&P 500 index is the leading US equities benchmark and usually outperforms other indices, it has been used to test high-dimensional PM approaches, see, for example, \cite{fan2012vast,hautsch2019large,ao2019approaching,kremer2020sparse,ding2021high}. However, whether these approaches can effectively handle datasets with much more stocks remains unknown.
    Considering mid-and small-cap companies that may offer more investors more opportunities, \cite{pun2019linear} extend the analysis to an incomplete Russell 2000 index constituents where the stocks with volatility greater than 0.04 are discarded, involving up to 1420 stocks in which most existing PM approaches fail. The optimality of their methods can only be established when some technical assumptions are satisfied. Also, their methods suffer from the instability of covariance estimation, which leads to a high turnover rate and transaction fee in their cases.

    In reality, any rational investor may choose a reasonable number of assets to hold when considering the transaction fees, the management fees, the budget constraint, and even the mental cost of looking into too many assets \citep{gao2013optimal}.     
    The portfolio strategies must be feasible so the investors can efficiently operate and manage the portfolios.
    On the other hand, it is crucial to maintain portfolio diversification and reduce overall risk profiles by selecting a suitable number of assets.
    Such a number may depend on the correlations among their prices \citep{fieldsend2004cardinality}.
    Overall, it is reasonable to impose cardinality constraints on EUM.
    However, the high dimensionality of the assets also brings computational challenges of existing methods on solving cardinality-constrained problems, which are generally NP-hard because a combinatorial number of searches is required to obtain exact solutions \citep{zhang2019learning},
    though many relaxation methods are proposed to approximate it within a reasonable time \citep{chang2000heuristics,fieldsend2004cardinality, gao2013optimal}.

    As the global financial market gets expanding, efficient portfolio strategies for high-dimensional PM are of practical relevance.
    For example, the Russell 2000 dataset containing around 1900 stocks is more challenging and less predictable because of its inherently ultrahigh dimensionality, higher volatility and lower signal-to-noise ratio, which will be further investigated in Section \ref{sec:data}.
    Faced with those above theoretical and empirical challenges for the ultrahigh-dimensional PM, we aim to answer the following question:
    \begin{center}
        \emph{How can we design an efficient algorithm to deal with the trade-off among the return, the risk, and the number of holding assets for high-dimensional portfolio management?}
    \end{center}
    To this end, we design a general portfolio strategy induced from the sample average approximation (SAA)  of the EUM problem with cardinality and probability simplex constraints for a variety of risk-averse utility functions.
    Compared with MV-based approaches, our proposed strategy avoids the estimation error for estimating moments, making it applicable in high-dimensional scenarios. We convert the original problem to the $l_1$-regularized SAA problem with theoretical guarantees and propose an efficient proximal algorithm with acceleration by safe screening techniques. Our strategy is effective and efficient even when the number of assets is far more than the sample size. In the meantime, the investors easily identify just a few valuable assets that can form diversified portfolios. 
    Last but not least, our strategy highlights new perspectives that an appropriate number of carefully chosen assets leads to better out-of-sample mean-variance efficiency from extensive case studies.

\section{Data Characteristics}\label{sec:data}

    In this paper, we consider the following two well-known stock market indices. 
    (1) \textbf{S\&P 500}.
    As one of the most widely followed equity indices, the S\&P 500 measures the performance of 500 large companies that are publicly traded. 
    We collect daily data from the CRSP database, including the open prices, the close prices, and the (holding period) returns from 2010 to 2020.
    Our stock pool is formed by:
    (a) Initialization with the S\&P 500 index components listed from 2010 to 2020;
    (b) If mergers and acquisitions (M\&A) occur from 2010 to 2020, then the records before and after M\&A are concatenated to form only one stock, and the return on the date M\&A happened is filled with the growth rate on that date, which results in a pool of 714 stocks in total  ;
    (c) For each rebalancing date, we only consider stocks that were listed in the S\&P 500 index by that date and had complete records of historical data in the previous year.
    (2) \textbf{Russell 2000}. Rather than focusing on large-cap companies like the S\&P 500, the Russell 2000 index seeks to represent the overall performance of small- and mid-cap companies on the stock market in the US. Therefore, the smallest 2000 stocks in the Russell 3000 index are included to form the Russell 2000 index.  First, we select a group of 5555 Russell 2000 stocks listed from 2004 to 2020 following the same procedure mentioned above and collect daily stock data accordingly.
    \begin{figure}[!t]
        \centering
        \subfigure[{Heatmaps of correlations among stocks in S\&P 500 (left) and Russell 2000 (right).}]{
          \includegraphics[width=0.85\textwidth]{} \label{subfig:cor}
        }
        \subfigure[ Proportional histogram of stock volatility in the two datasets.  ]{
          \includegraphics[width=0.85\textwidth]{} \label{subfig:std}
        }
         \caption{Visualization of the datasets.}
        \label{fig:cor_std}
    \end{figure}
    As discussed earlier, most high-dimensional portfolio selection literature involves empirical studies on the S\&P 500 index, while only a few analyze their strategies on Russell 2000.
     Dealing with the latter requires computational efficiency and numerical stability on high-dimension datasets, which is more challenging.

    The differences between the two datasets are illustrated in Figure \ref{fig:cor_std}.
    It is evident that the prices in the S\&P 500 dataset exhibit higher correlations than prices in the other dataset.
    To maintain a more diversified portfolio, one needs to include more assets in Russell 2000 since the available stocks are less correlated.
    Another difficulty for portfolio management arising in the high-dimensional scenarios is the increasing volatility.
    Although the high volatility may offer more opportunities for day traders by large swings, it hinders steady gains over holding periods.
    In general, stocks of companies with smaller market capitalization will exhibit higher volatility, as shown by the histograms in Figure \ref{subfig:std}.
    The distribution of the volatility of the Russell 2000 dataset has heavier tails than the one of the S\&P 500 dataset.
     The heavy-tailedness also hinders the estimation of the sample covariance matrix, especially when the number of assets can be larger than the number of observations \citep{ke2019user}.
    Overall, the lower correlations and higher volatility make it tricky to identify valuable assets from large portfolios. 
    As a result, data-driven portfolio strategies will include lots of assets to reduce potential risks, which sacrifices the out-of-sample mean-variance efficiency and is also impractical for high-dimensional PM.
    One natural strategy to tackle the previous issues is holding a reasonable number of assets while maintaining diversiﬁed portfolios.
    \begin{figure}[!tp]
        \centering
          \includegraphics[width=0.9\textwidth]{}
        \caption{In-sample returns and risks against the number of assets with cubic regression lines plotted.
        The shadow regions denote the estimated 95\% confidence intervals via bootstrapping.
        The portfolios are constructed based on NYSE data with 3680 stocks.}
        \label{fig:frontier}
    \end{figure}

    With such motivation, our portfolio strategy aims to impose cardinality constraints on the number of holding assets while producing satisfying mean-variance efficiency.
    Our approach provides new perspectives to the return-risk trade-offs, where an appropriate selection of $s$ assets can inform us of a better trade-off between return and risk with more stability. 
    As summarized in Figure \ref{fig:frontier}, each point represents one portfolio with a different value of regularization parameters, demonstrating the trade-off between the three factors - risk, return, and $s$ of both the MV approach and our proposed method with logarithm utility. The asset pool consists of 3680 stocks based on NYSE data (See Section \ref{subsec:NYSE}). The left panel of Figure \ref{fig:frontier} is selected from the in-sample efficient frontier of the MV approach. The MV approach and our method show a similar pattern: the better trade-off can be achieved with a relatively small number of assets than the whole asset pool.
    A reasonable amount of diversification contributes to better out-of-sample mean-variance efficiency in this example, indicating that our strategies are applicable for high-dimensional PM.

\section{Statistical Methodology}\label{sec:high-dim}

    \subsection{Sparsity-Induced Portfolios}\label{subsec:eq_l0_l1}

        Suppose we have $d$ assets with price relatives $\cX_1,\ldots,\cX_d$ to be managed,  i.e., random variables $\cX_j$'s are the nonnegative ratios of the price at the end of the day to the price at the beginning of the day for asset $j$.  For the rest of the paper, we assume that $\cX_j\geq \eta_{\min}>0$ and $\EE[\cX_j]<\infty$ for all $j$.
        Let $\bm{\cX}=(\cX_1,\ldots,\cX_d)^{\top}\in\RR^d$ be the price relative vector and $\bw$ be its portfolio allocation vector. The goal of portfolio selection is to maximize the expected utility of wealth:
        \begin{align}
            \min\limits_{\bw\in\cC} \quad -\EE [u(\bw^{\top}\bm{\cX})] \label{opt:expected}
        \end{align}
        where the expectation is taken with respect to $\bm{\cX}$ and $\cC\subseteq\RR^d$ is a feasible set.
        The usual choice of $\cC$ for portfolio selections can be: (1) The subspace $\RR_+$ or $\RR$; (2) The simplex of dimension $d-1$, $\cS_d=\{\bw\in\RR^d: \bw^{\top}\one=1\}$; (3) The probability simplex of dimension $d-1$, $\Delta_d=\{\bw\in\RR^d: \bw\succeq \bm{0},\bw^{\top}\one=1\}$.
        For the first choice of $\cC$, extra norm constraints or regularizers on $\bw$ may be needed to guarantee that the entries of solutions are all finite.
        We will mainly focus on the case when $\cC=\Delta_d$.
        In the expected utility theory, the utility function $u$ quantitatively measures the outcome value to an investor, and some common utility functions are listed in Table \ref{tab:utility_func}.
        We formally define it as follows.
        \begin{table}[!t]\footnotesize
           \begin{tabularx}{0.97\textwidth}{XXl}
                \toprule
                \textbf{Name} & \textbf{Formula} & \textbf{Parameter}\\
                \midrule
                Linear utility & $u(z)=\eta+az$, & $a>0$, $\eta\in\RR$\\
                Quadratic utility & $u(z)=\eta z-\lambda z^2/2$, $z<\eta/\lambda$ & $\lambda>0,\eta\in\RR$\\
                Exponential utility & $u(z)=1-e^{-a (z +\eta)}$, & $a>0$, $\eta\in\RR$\\
                Power utility & $u(z)=(z+\eta)^{1-\lambda}$, & $\lambda\in(0,1),\eta\geq 0$\\
                Logarithmic utility & $u(z)=\log(z+\eta)$, &$\eta\geq0$\\
                \bottomrule
            \end{tabularx}
            \caption{List of common utility functions.}\label{tab:utility_func}
        \end{table}
        \begin{definition}[Utility Functions]\label{def:utility}
            A function $u:\RR_+\rightarrow \RR$ is a utility function if $u$ is increasing and concave on its domain.
        \end{definition}

        For the following section, we use $f$ to denote a general objective function.
        To impose a sparsity structure on the estimated portfolio allocations, one would add a regularization term $\Omega$, resulting in an optimization problem:
        \begin{align}
                \min\limits_{\bw\in\cC}&\quad  f(\bw) + \lambda \Omega(\bw),
            \label{opt:regularized}
        \end{align}
        where $\lambda>0$ is a regularization parameter that controls the sparsity of the allocation and can be picked from the Lasso-type path, and $\Omega$ is a sparsity-induced norm.
        Alternatively, one can consider the constrained problem:
        \begin{align}
                \min\limits_{\bw\in\cC}&\quad  f(\bw) \quad \text{s.t.} \quad \Omega(\bw)\leq s,
            \label{opt:constrained}
        \end{align}
        where $s>0$ is the threshold parameter.
        The two formulations are mathematically equivalent when the objective function and $\Omega$ are convex to $\bw$, and the global minimum of the constrained problem is strictly feasible.
         Let $\Sigma_d^s=\{\bw\in\RR^d:\|\bw\|_0\leq s\}$.
          Considering the cardinality constraints with $\cC=\Delta_d$, we formulate the problem as:
        \begin{align}
            \min\limits_{\bw\in\Delta_d\cap\Sigma_d^s}  f(\bw) ,\label{opt:l0_constrained}
        \end{align}
        which is unfortunately NP-hard even if the expectation can be evaluated efficiently \citep{zhang2019learning}.
         When $f$ is a quadratic function for a mean-variance portfolio, the problem can be cast into a mixed-integer quadratic program (MIQP) by introducing $d$ extra variables.   
       However, it is still computationally intractable and memory-consuming when $d$ is large and the covariance matrix is dense.
        Though many convex relaxation algorithms are proposed for MIQPs, numerical results show that they may provide poor lower bounds and even be incapable of solving medium-sized problems within a reasonable time.
        For example, it may take up to 40 minutes to run only one instance when $d=458$ and $s=7$ \citep{zheng2014improving}.
        One may be interested in whether we can first solve the $l_1$-regularized problem \eqref{opt:l1_regularized}:
        \begin{align}
            \min\limits_{\bw\in\RR_+^d}  f(\bw) + \lambda\|\bw\|_1, \label{opt:l1_regularized}
        \end{align}
        with the non-negativity constraint and then project the estimated allocation onto the $l_1$-norm unit sphere.
        Such an intuitive approach indeed works under mild assumptions.
        To see this, we first introduce the single crossing property, formally defined as below:
        \begin{definition}[Single Crossing]
            Suppose that $Z_1$ and $Z_2$ are two random variables.
            We say that $Z_1$ single crosses $Z_2$ from below if there is a crossing point $c\in\RR$ such that
             \begin{align*}
                 \begin{cases}
                    \overline{F}_{Z_1}(t)\geq \overline{F}_{Z_2}(t) &, \ t\leq c\\
                    \overline{F}_{Z_1}(t)\leq \overline{F}_{Z_2}(t)  &,\ t>c
                 \end{cases},
             \end{align*}
             where $\overline{F}_{Z_1}(t)=\PP(Z_1>t)$ and $\overline{F}_{Z_2}(t)=\PP(Z_2>t)$ are the survival functions of $Z_1$ and $Z_2$.
        \end{definition}
         The single crossing property is widely used in the economics literature; see, for example, \citet{jewitt1987risk,athey2001single,ballester2017single}.
        This sign change property holds when
        (1) one of the two decisions leads to a degenerate outcome, such as in the comparative statics problems, or (2) the decisions are made from some distribution family, such as in the single risky asset portfolio problem.
        An important implication of single crossing is that more risk-averse investors, in the sense of Arrow and Pratt \citep{arrow1971theory}, will hold a larger share of their wealth of the safe asset than the single risky asset \citep{jewitt1987risk}. 
        From the statistical perspective, the single crossing property holds when two random variables come from common distributions, such as normal, lognormal, and truncated normal, under different conditions, as discussed in \citet[Chapter 6]{levy2006stochastic}.
        An intuitive example is given in Figure \ref{fig:single_crossing} when the two univariate random variables $Z_1=\bw_1^{\top}\cX$ and $Z_2=\bw_2^{\top}\cX$ are normally distributed.
        We note that this is only a particular case, and no explicit distribution for $\cX$ is assumed in this paper.
        When the decisions $\bw_1$ and $\bw_2$ are estimated from data, $Z_1$ and $Z_2$ are from two families of distribution functions, with each corresponding pair of realizations crossing no more than once, resulting in the preservation of the risk aversion preference. 
        The following theorem shows that one can solve the $l_0$ constrained problem \eqref{opt:l0_constrained}, by solving the $l_1$ regularized problem \eqref{opt:l1_regularized} under such conditions.

        \begin{figure}[!t]
            \centering
            \includegraphics[width=0.9\textwidth]{}
            \caption{Cumulative distribution functions of $Z_1\sim\mathcal{N}(0,\sigma_1)$ and $Z_2\sim\mathcal{N}(1,\sigma_2)$, when $Z_1$ and $Z_2$ are not crossing ($\sigma_1=\sigma_2$), $Z_1$ single crosses $Z_2$ from above ($\sigma_1<\sigma_2$), and $Z_1$ single crosses $Z_2$ from below ($\sigma_1>\sigma_2$).}
            \label{fig:single_crossing}
        \end{figure}
        
        \begin{theorem}[Equivalence for Expected Utility Maximization]\label{thm:eq_l1_l0_E}
              Suppose that the objective function in both optimization problems \eqref{opt:l1_regularized} and \eqref{opt:l0_constrained} is $f(\bw)=-\EE[u(\bw^{\top}\bm{\cX})]$ with $\cX_j\geq \eta_{\min}>0$ and $\EE[\cX_j]<\infty$ for all $j$.
            Let $\bw_1$ and $\bw_2$ be the solutions to problems \eqref{opt:l1_regularized} and \eqref{opt:l0_constrained} respectively with $s=\|\bw_1\|_0>0$, and define $\tilde{\bw}_1=\bw_1/\|\bw_1\|_1$, $\tilde{\bw}_2=\bw_2\cdot\|\bw_1\|_1$, $Z_i=\bw_i^{\top}\bm{\cX}$ and $\tilde{Z}_i=\tilde{\bw}_i^{\top}\bm{\cX}$ for $i=1,2$.
            Suppose that
            (i) $u$ is strictly increasing and strictly concave, and
            (ii) $\tilde{Z}_1$ and $Z_2$ cross for at most one time ,
            then $\tilde{\bw}_1$ is an solution to problem \eqref{opt:l0_constrained}.
        \end{theorem}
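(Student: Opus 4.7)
The plan is to proceed in three stages: (i) use the $l_1$-optimality of $\bw_1$ against the rescaled competitor $\tilde{\bw}_2$ to obtain a ``scaled'' expected-utility inequality $\EE[u(c\tilde{Z}_1)]\ge \EE[u(cZ_2)]$ with $c:=\|\bw_1\|_1$; (ii) use single crossing together with strict monotonicity and strict concavity of $u$ to remove the common scale factor; (iii) check that $\tilde{\bw}_1$ is feasible for \eqref{opt:l0_constrained}, so that the unscaled inequality forces it to attain the minimum.

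Stage (i) is straightforward: because $\bw_2\in\Delta_d\subset\RR_+^d$, the point $\tilde{\bw}_2=c\bw_2$ lies in $\RR_+^d$ with $\|\tilde{\bw}_2\|_1 = c\|\bw_2\|_1 = c = \|\bw_1\|_1$, so the two $l_1$-penalty terms in \eqref{opt:l1_regularized} coincide. The optimality of $\bw_1$ then collapses to $\EE[u(\bw_1^\top\cX)]\ge \EE[u(\tilde{\bw}_2^\top\cX)]$, i.e., $\EE[u(c\tilde{Z}_1)]\ge \EE[u(cZ_2)]$ (since $Z_1=c\tilde{Z}_1$ and $\tilde{Z}_2=cZ_2$). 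Stage (iii) is equally direct: $\bw_1\ge 0$ and $c=\|\bw_1\|_1>0$ give $\tilde{\bw}_1=\bw_1/c\in\Delta_d$ with $\|\tilde{\bw}_1\|_0 = \|\bw_1\|_0 = s$, so $\tilde{\bw}_1\in\Delta_d\cap\Sigma_d^s$. Once Stage (ii) yields $\EE[u(\tilde{Z}_1)]\ge \EE[u(Z_2)]$, optimality of $\bw_2$ forces equality and hence $\tilde{\bw}_1$ also minimizes \eqref{opt:l0_constrained}.

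The main obstacle is Stage (ii). Using the tail-integration identity for non-negative random variables (recall $\tilde{Z}_1, Z_2\ge \eta_{\min}>0$), I would write
$$\Phi(c) := \EE[u(c\tilde{Z}_1)] - \EE[u(cZ_2)] = c\int_0^\infty u'(ct)\,\Delta(t)\,dt,\qquad \Delta(t):=\overline{F}_{\tilde{Z}_1}(t)-\overline{F}_{Z_2}(t),$$
and show $\Phi(c)\ge 0\Rightarrow \Phi(1)\ge 0$. Hypothesis (ii) makes $\Delta$ change sign at most once, at some threshold $t_0$, and hypothesis (i) makes $u'$ strictly positive and strictly decreasing. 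The natural approach is to split the integral at $t_0$ and compare the positive-contribution half $[0,t_0]$ to the negative-contribution half $[t_0,\infty)$, with the weighting $u'(ct)$ reshaped monotonically as $c$ varies. Because $\Delta$ has only one sign change, rescaling $c\to 1$ reweights the two halves by the ratio $u'(t)/u'(ct)$, which although typically non-constant in $t$ is monotone; combined with strict concavity and the single-crossing structure of $\Delta$, this reweighting should rank-preserve the integrand so that the net sign of $\Phi$ is not flipped. Making this rank-preservation rigorous, perhaps by a mean value theorem on each half-interval and careful bookkeeping of the $u'$-weights, is the most delicate part of the argument; it is the place where single crossing is genuinely needed, since without it the positive and negative contributions to $\Phi(c)$ could rearrange enough under rescaling to change sign.
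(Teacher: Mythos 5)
Your Stages (i) and (iii) are correct and coincide with the paper's setup: since $\tilde{\bw}_2\in\RR_+^d$ has the same $l_1$ norm as $\bw_1$, the penalties cancel and $l_1$-optimality gives $\EE[u(c\tilde{Z}_1)]\ge\EE[u(cZ_2)]$ with $c=\|\bw_1\|_1$, while feasibility of $\tilde{\bw}_1$ for \eqref{opt:l0_constrained} gives $\EE[u(\tilde{Z}_1)]\le\EE[u(Z_2)]$, so everything hinges on proving $\EE[u(\tilde{Z}_1)]\ge\EE[u(Z_2)]$. The genuine gap is Stage (ii): the implication ``$\Phi(c)\ge 0\Rightarrow\Phi(1)\ge 0$'' that you propose to extract from single crossing and concavity alone is false as a distributional statement. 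Take $u(z)=1-e^{-az}$, so that $u'(t)/u'(ct)=e^{a(c-1)t}$ is even monotone as you hope; let $\Delta=\overline{F}_{\tilde{Z}_1}-\overline{F}_{Z_2}$ equal a small $\epsilon>0$ on $[1,2]$ and $-M$ on $[2,3]$ and vanish elsewhere (one sign change, realizable by survival functions of variables bounded below), and take $c=2$. A direct computation gives $\Phi(1)\ge 0$ iff $M/\epsilon\le e^{a}$ but $\Phi(2)\ge 0$ iff $M/\epsilon\le e^{2a}$, so any $M/\epsilon\in(e^{a},e^{2a})$ yields $\Phi(2)>0$ and $\Phi(1)<0$. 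The point is that the more sharply decreasing weight $u'(ct)$ (for $c>1$) discounts the negative right-hand lobe of $\Delta$ more than $u'(t)$ does, so positivity at scale $c$ cannot be pulled back to scale $1$; no amount of rank-preservation bookkeeping on the two half-intervals will rescue this. (Separately, for general strictly concave $u$ the ratio $u'(t)/u'(ct)$ need not be monotone: its logarithmic derivative is $cA(ct)-A(t)$ with $A=-u''/u'$ the Arrow--Pratt coefficient, which has no fixed sign outside CARA/CRRA-type families.)

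What the paper does, and what a repair requires, is a case split on the \emph{direction} of the single crossing, with the $l_0$-optimality inequality used inside the transfer argument rather than only at the very end. The paper's engine is second-order stochastic dominance together with its Proposition S1 (for variables bounded below, a single crossing makes SSD equivalent to ordering of the means), plus the observation that the crossing pattern is preserved under the increasing maps $z\mapsto cz$ and $z\mapsto u(cz)$. Depending on the crossing direction, one then transfers an inequality either from scale $1$ up to scale $c$ --- which saturates your Stage (i) inequality and, by strict concavity and uniqueness of the $l_1$ solution, forces $\tilde{\bw}_2=\bw_1$, i.e.\ $\bw_2=\tilde{\bw}_1$ --- or from scale $c$ down to scale $1$, which saturates the $l_0$-optimality inequality directly. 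Your proposal uses only the scale-$c$ inequality as input and never invokes the crossing direction, which is exactly why the core step cannot close; you need the dichotomy and both optimality conditions as working hypotheses, not just as the final squeeze.
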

         The single crossing assumption in Theorem \ref{thm:eq_l1_l0_E} states that the better portfolio (in terms of linear utility) can be distinguishable at different levels of random (linear) utility.
        Furthermore, an indifference curve of a better portfolio (in the space of utility levels) crosses that of the other portfolio at most once.
        This assumption captures the idea that a better portfolio is preferred for more random utility. 
        As a result, it is more likely to profit through investing in the better portfolio at higher levels of random utility, thus making it possible to separate the two portfolios.
        In the proof of Theorem \ref{thm:eq_l1_l0_E}, we also show that such preferences can be preserved for other utility functions.
        With such preferences, Theorem \ref{thm:eq_l1_l0_E} guarantees that the two portfolios $\tilde{\bw}_1$ and $\bw_2$ obtain the same expected utility under constraints of \eqref{opt:l0_constrained}.

    \subsection{Sample Average Approximation}\label{subsec:consist}

        Since the optimization over expectation is generally intractable and impractical, a natural way to solve the optimization problem \eqref{opt:expected} is to minimize the empirical sample average objective:

        \begin{align}
            \min_{\bw\in\cC}&\quad h(\bw) \triangleq-\frac{1}{n}\sum_{i=1}^n u(\bw^{\top}\bX_i),\label{opt:SAA}
        \end{align}
        where $\bX_i$'s are independent and identically distributed as $\bm{\cX}$, which is known as the \emph{sample average approximation} (SAA) method in the stochastic optimization literature \citep{shapiro2014lectures}.
        The mean-variance portfolio problems are generally solved implicitly by the SAA method, where the mean and variance of $\bm{\cX}$ are estimated from samples.
        One of the connections between the utility theory and the MV portfolio theory is that the latter's objective is the utility function's \emph{moment approximation} (MA).
        Define $U_{\bw}(\bx)=u(\bw^{\top}\bx)$ and consider the second-order approximation of $-\EE[U_{\bw}(\bm{\cX})]$  :
        \begin{align}
            g_{\bx_0}(\bw, \bmu,\bSigma)\triangleq & -U_{\bw}(\bx_0)- \nabla_{\bx} U_{\bw}(\bx_0)^{\top}(\bmu-\bx_0) -\frac{1}{2}\tr\left( \nabla_{\bx}^2 U_{\bw}(\bx_0)(\bSigma+\bmu \bmu^{\top}) \right) \notag\\
            &\quad +\bx_0^{\top} \nabla_{\bx}^2 U_{\bw}(\bx_0)\bmu - \frac{1}{2} \bx_0^{\top} \nabla_{\bx}^2 U_{\bw}(\bx_0)\bx_0, \label{eq:g_w_mu_sigma}
        \end{align}
        where $\bmu=\EE(\bm{\cX})$ and $\bSigma=\Var(\bm{\cX})$.
        If $\bx_0=\bmu$ and the utility function is quadratic, then equation \eqref{eq:g_w_mu_sigma} turns out be the objective of MV portfolios and the MA methods aim to solve:
        \begin{align}
            \min\limits_{\bw\in\cC}\quad g_{\bx_0}(\bw, \hat{\bmu},\hat{\bSigma}), \label{opt:moment_2}
        \end{align}
        where $\hat{\bmu}$ and $\hat{\bSigma}$ based on samples $\bX_1,\ldots,\bX_n$ of $\bm{\cX}$.
        Solutions to the problem \eqref{opt:moment_2} also refer to the \emph{plug-in estimators} of $\bw$ \citep{cai2020high,ding2021high}.
        While such an approach is justified by classical statistics theory because the plug-in portfolio is an MLE of the optimal portfolio, its out-of-sample performance is generally poor \citep{ao2019approaching}.

        Combining optimization problem \eqref{opt:SAA} with regularization, we next derive the non-asymptotic error bound for coefficients estimation in the ultra-high dimensional settings. 
        To begin with, we consider the constrained variants of population and empirical minimization problems:
        \begin{align}
            \min\limits_{\bw\in\RR_+^d}\quad& \EE[h(\bw)] \quad\text{subject to }\|\bw\|_1\leq R \label{opt:M_estimator_pop}\\
            \min\limits_{\bw\in\RR_+^d}\quad& h(\bw) + \lambda\|\bw\|_1\quad\text{subject to }\|\bw\|_1\leq R.\label{opt:M_estimator_emp}
        \end{align}
        The constraint in the above problems guarantees that the estimator lies in a bounded set, which is helpful for our statistical analysis.
        One can also view problem \eqref{opt:l1_regularized} with objective functions $f(\bw)$ being $\EE[h(\bw)]$ as the Lagrangian formulation of the problem \eqref{opt:M_estimator_pop}.
        For problem \eqref{opt:M_estimator_emp}, the constraint $\|\bw\|_1\leq R$ will have no effect when the regularization parameter $\lambda$ is large enough.
        We denote the solutions to the optimization problems \eqref{opt:M_estimator_pop} and \eqref{opt:M_estimator_emp} by $\bw^*$ and $\hat{\bw}_n$, respectively.
        Before we present our results, we first introduce some matrix notations.
        We use $\bX=
        \left[ \bX_1, \cdots, \bX_n\right]^{\top}\in\RR^{n\times d}$ to denote the sample matrix and $\bX_{\cdot j}=\left[X_{1j} ,\cdots, X_{nj}\right]^{\top}\in\RR^{n}$ to denote its $j$th column for $j=1,\ldots,d$.
        We make the following assumptions.
        
        \begin{assumption}We assume the following holds:
            \begin{enumerate}[(a)]
                \item The utility function $u$ is $L_u$-Lipschitz continuous, differentiable, and concave.
                \item The theoretical minimizer $\bw^*$ is unique and denote its cardinality by $s=\|\bw^*\|_0$.
                
                \item There exists $\nu>0$ such that $(X_{ij}-\nu)$'s are sub-Gaussian with parameter $\sigma$, $\EE[\exp(t(X_{ij}-\nu)]\leq \exp(t^2\sigma^2/2)$, $\forall\ t>0,\ i\in[n],j\in[d].$

                \item (Restricted strong convexity) There exists $\alpha\geq 2$ and $\kappa$ such that almost surely
                $$0<\kappa \leq  \inf\limits_{|\cS|\leq s}\inf\limits_{\bv\in C(\cS,\gamma_1,\gamma_2)}\frac{\bv^{\top}\nabla^2_{\bw}\EE[h(\bw^*)]\bv}{\|\bv\|_2^2}$$
                where $\gamma_1=\frac{\alpha}{\alpha-1}$, $\gamma_2=\frac{\sqrt{s}}{\alpha-1}$, and $C(\cS,\gamma_1,\gamma_2)=\{\bv\in\RR^d\mid \|\bv_{\cS^C}\|_1\leq \gamma_1\|\bv_{\cS}\|_1+\gamma_2\|\bv_{\cS}\|_2\}$.
            \end{enumerate}\label{ass:1}
        \end{assumption}
        From Definition \ref{def:utility}, the concavity in Assumption \ref{ass:1}(a) naturally holds for all utility functions.
        Most of the utility functions in Table \ref{tab:utility_func} such as exponential utility and logarithmic utility with $\eta>0$, are Lipshitz-continuous and differentiable.
        Then it also follows that $H(\bz)=-\sum_{i=1}^nu(z_i)/n$ is $L_z$-Lipshitz continuous with $L_z=L_u/n$.
        Assumption \ref{ass:1}(c) is commonly used in the literature of non-asymptotic high-dimensional statistics \citep{wainwright2019high}.
        Especially, if $\cX$ is a bounded set, then Hoeffding's lemma guarantees that Assumption \ref{ass:1}(c) holds.
        Assumption \ref{ass:1}(d) presumes good curvature for the population cost function within certain region of the parameter space. 
        It is weaker than the one used in \citep[Section 9.8]{wainwright2019high}, since we only need the restricted strong convexity property to hold in the intersection of the unit $l_2$-norm ball and a cone.
        We present our main result in as follows.
        
        \begin{theorem}[Coefficients Estimation Error]\label{thm:est_err}
            Suppose Assumption \ref{ass:1} holds.
            For any $\delta\in(0,1)$ and $\lambda\gtrsim \alpha (\nu\vee 1)L_u\sigma \sqrt{\log(d/\delta)/n}$, with probability at least $1-\delta$, it holds that
            \begin{align*}
                \|\hat{\bw}_n-\bw^*\|_2 & \lesssim \cO(\kappa^{-1}(\nu\vee 1)L_u\sigma\sqrt{s\log(d/\delta)/n}).
            \end{align*}
        \end{theorem}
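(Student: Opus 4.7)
The plan is to follow the unified framework for regularized $M$-estimators (Negahban--Ravikumar--Wainwright--Yu), which combines a basic inequality from the optimality of $\hat{\bw}_n$, a cone/decomposability argument that localizes the error $\bv:=\hat{\bw}_n-\bw^*$, and restricted strong convexity (RSC) to convert an $\ell_1$ control on $\bv$ into the stated $\ell_2$ rate. First I would invoke the optimality of $\hat{\bw}_n$ in problem \eqref{opt:M_estimator_emp} (noting $\bw^*$ is feasible since $\|\bw^*\|_1\le R$) to get the basic inequality
\[
h(\hat{\bw}_n)-h(\bw^*) \;\le\; \lambda\bigl(\|\bw^*\|_1-\|\hat{\bw}_n\|_1\bigr) \;\le\; \lambda\bigl(\|\bv_{\cS}\|_1-\|\bv_{\cS^C}\|_1\bigr),
\]
where $\cS:=\mathrm{supp}(\bw^*)$ has cardinality $s$ by Assumption \ref{ass:1}(b), and the second step is the standard $\ell_1$-decomposition. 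Combining this with the convex lower bound $h(\hat{\bw}_n)-h(\bw^*)\ge\langle\nabla h(\bw^*),\bv\rangle$ and extracting $\|\nabla h(\bw^*)-\nabla\EE[h(\bw^*)]\|_\infty\|\bv\|_1$ via H\"older's inequality, I would conclude that, on the event where this sup-norm deviation is at most $\lambda/\alpha$, the error $\bv$ lies in the cone $C(\cS,\gamma_1,\gamma_2)$ of Assumption \ref{ass:1}(d), with the coefficients $\gamma_1=\alpha/(\alpha-1)$ and $\gamma_2=\sqrt{s}/(\alpha-1)$ emerging naturally from the ratio of noise to regularization.

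The technical heart of the argument is the tail bound
\[
\|\nabla h(\bw^*)-\nabla\EE[h(\bw^*)]\|_\infty \;\lesssim\; (\nu\vee 1)\,L_u\sigma\sqrt{\log(d/\delta)/n}
\]
with probability at least $1-\delta$. Writing the $j$-th coordinate of the gradient as $-n^{-1}\sum_i u'(\bw^{*\top}\bX_i)\bigl((X_{ij}-\nu)+\nu\bigr)$, I would exploit that $u'(\cdot)\in[0,L_u]$ by Assumption \ref{ass:1}(a) and that $X_{ij}-\nu$ is sub-Gaussian$(\sigma)$ by Assumption \ref{ass:1}(c), so the centered product is sub-Gaussian with parameter $L_u\sigma$ and Hoeffding's inequality produces the $\sqrt{\log(d/\delta)/n}$ rate after a union bound over $j\in[d]$; the residual $\nu$-term is handled similarly (since $\bx\mapsto u'(\bw^{*\top}\bx)$ is bounded and Lipschitz of sub-Gaussian inputs) and yields the $(\nu\vee 1)$ prefactor.

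Finally, with $\bv$ localized in the cone, a second-order Taylor expansion of $\EE[h]$ around $\bw^*$ combined with Assumption \ref{ass:1}(d) gives
\[
\EE[h(\hat{\bw}_n)]-\EE[h(\bw^*)]-\langle\nabla\EE[h(\bw^*)],\bv\rangle \;\gtrsim\; \kappa\,\|\bv\|_2^2,
\]
and chaining this with the basic inequality, the cone-induced bound $\|\bv_{\cS}\|_1-\|\bv_{\cS^C}\|_1\le\sqrt{s}\|\bv\|_2$, and the noise control produces a quadratic inequality in $\|\bv\|_2$ whose solution is $\|\bv\|_2\lesssim\lambda\sqrt{s}/\kappa\asymp\kappa^{-1}(\nu\vee 1)L_u\sigma\sqrt{s\log(d/\delta)/n}$, as claimed. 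The main obstacle will be the gradient-concentration step: the summands are products of bounded quantities with sub-Gaussian coordinates of nonzero mean $\nu$, so a careless bound loses a factor of $\nu$, and one must carefully decompose into a centered sub-Gaussian piece and a bounded Lipschitz-of-sub-Gaussian piece so that the final rate matches the $(\nu\vee 1)$ prefactor. A secondary subtlety is that Assumption \ref{ass:1}(d) is a pointwise curvature condition at $\bw^*$; to apply Taylor's theorem with a midpoint on the segment $[\bw^*,\hat{\bw}_n]$ one needs either a localization argument (bootstrapping the final $\ell_2$ bound to guarantee $\hat{\bw}_n$ is close enough to $\bw^*$) or an implicit strengthening of RSC to a small neighborhood of $\bw^*$.
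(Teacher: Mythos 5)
Your overall architecture is the right one and, in outline, coincides with the paper's: the optimality/basic inequality, a cone condition for $\Delta=\hat{\bw}_n-\bw^*$, sup-norm concentration of the gradient with the $(\nu\vee 1)L_u\sigma\sqrt{\log(d/\delta)/n}$ rate, and a final quadratic inequality $\kappa\|\Delta\|_2^2\lesssim\lambda\sqrt{s}\|\Delta\|_2$. (Your H\"older-based cone argument yields the one-parameter cone $\|\Delta_{\cS^C}\|_1\le\gamma_1\|\Delta_{\cS}\|_1$, whereas the paper uses a weighted sorting bound to land in the two-parameter cone $C(\cS_0,\gamma_1,\gamma_2)$ of Assumption \ref{ass:1}(d); since the former is contained in the latter, that discrepancy is harmless. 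Your flagged concern about the dependence between $u'(\bw^{*\top}\bX_i)$ and $X_{ij}-\nu$ in the gradient concentration is legitimate and is in fact glossed over by the paper as well.)

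The genuine gap is in your final chaining step. Your restricted strong convexity is stated for the \emph{population} objective, $\EE[h(\hat{\bw}_n)]-\EE[h(\bw^*)]-\langle\nabla\EE[h(\bw^*)],\Delta\rangle\gtrsim\kappa\|\Delta\|_2^2$, but your basic inequality controls the \emph{empirical} excess loss $h(\hat{\bw}_n)-h(\bw^*)$. These cannot be chained using only the gradient concentration at $\bw^*$: the gradient bound handles the linear term, but you still need the first-order Taylor remainders of $h$ and of $\EE[h]$ to agree, uniformly over the (data-dependent) direction $\Delta$ in the cone. This is exactly the paper's Theorem \ref{thm:rsc}, which proves
\begin{align*}
|\cE_n(\Delta)-\EE[\cE_n(\Delta)]|\;\le\;16L_u\sigma\|\Delta\|_1\sqrt{2\log(12d\log^2(d)/\delta)/n}
\quad\text{uniformly over }\Delta\in\BB_2(1/d,2R),
\end{align*}
via Rademacher symmetrization and the Lipschitz property of $u$ (following Theorem 9.34 of Wainwright); the resulting $\|\Delta\|_1$-deviation term is then absorbed into the choice of $\lambda$, which is why $\lambda$ must dominate both the gradient sup-norm and this uniform deviation. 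Without this uniform empirical-process bound your argument does not close: a pointwise gradient bound cannot transfer curvature from $\EE[h]$ to $h$ at the random point $\hat{\bw}_n$. Your closing remark about localization addresses a different (and more minor) issue, namely that Assumption \ref{ass:1}(d) only constrains the Hessian at $\bw^*$; the missing ingredient is the uniform control of $\cE_n-\EE[\cE_n]$, not the Taylor midpoint.
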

        Theorem \ref{thm:est_err} says that if the sparsity level and the dimension satisfy that $s\log d=o(n)$, the coefficients estimation error vanishes when both $n$ and $d$ go to infinity.
        Our error rate $\cO(\sqrt{s\log d /n})$ also matches the rate of Lasso estimator for linear regression.
        As an immediate consequence of Theorem \ref{thm:est_err} by Cauchy-Schwartz inequality, the portfolio performances of the SAA solutions also have finite-sample error bounds as stated in Corollary \ref{cor:SAA_perfomance}.
        \begin{corollary}[SAA Performance]\label{cor:SAA_perfomance}Under the same conditions as in Theorem \ref{thm:est_err}, with probability at least $1-\delta$, it holds that\begin{align*}
                |\hat{\bw}_n^{\top}\bmu-\bw^{*\top}\bmu|&\lesssim \cO(\|\bmu\|_2\kappa^{-1}(\nu\vee 1)L_u\sigma\sqrt{s\log(d/\delta)/n})\\
                |\hat{\bw}_n^{\top}\bSigma\hat{\bw}_n-\bw^{*\top}\bSigma\bw^{*}|&\lesssim \cO(R\|\bSigma\|_2\kappa^{-1}(\nu\vee 1)L_u\sigma\sqrt{s\log(d/\delta)/n}).
            \end{align*}
        \end{corollary}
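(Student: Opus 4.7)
The plan is to derive both bounds as direct corollaries of Theorem~\ref{thm:est_err} via two elementary identities, then combine with Cauchy--Schwarz and the feasibility constraint $\|\bw\|_1\le R$.

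For the first bound, I would write
$$\hat{\bw}_n^{\top}\bmu-\bw^{*\top}\bmu=(\hat{\bw}_n-\bw^*)^{\top}\bmu,$$
and apply the Cauchy--Schwarz inequality to get $|(\hat{\bw}_n-\bw^*)^{\top}\bmu|\le \|\bmu\|_2\,\|\hat{\bw}_n-\bw^*\|_2$. Substituting the high-probability estimation bound from Theorem~\ref{thm:est_err} immediately yields the claimed rate $\|\bmu\|_2\kappa^{-1}(\nu\vee 1)L_u\sigma\sqrt{s\log(d/\delta)/n}$.

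For the second bound, since $\bSigma$ is symmetric, I would use the telescoping identity
$$\hat{\bw}_n^{\top}\bSigma\hat{\bw}_n-\bw^{*\top}\bSigma\bw^{*}=(\hat{\bw}_n-\bw^*)^{\top}\bSigma(\hat{\bw}_n+\bw^*),$$
followed by Cauchy--Schwarz and the operator-norm bound $\|\bSigma \bv\|_2\le\|\bSigma\|_2\|\bv\|_2$ to obtain the upper bound $\|\bSigma\|_2\,\|\hat{\bw}_n+\bw^*\|_2\,\|\hat{\bw}_n-\bw^*\|_2$. To control $\|\hat{\bw}_n+\bw^*\|_2$, I would use that both $\hat{\bw}_n$ and $\bw^*$ are feasible for the constrained problems \eqref{opt:M_estimator_pop} and \eqref{opt:M_estimator_emp}, so $\|\hat{\bw}_n+\bw^*\|_2\le \|\hat{\bw}_n+\bw^*\|_1\le \|\hat{\bw}_n\|_1+\|\bw^*\|_1\le 2R$. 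Plugging this together with Theorem~\ref{thm:est_err} produces the bound scaling like $R\|\bSigma\|_2\kappa^{-1}(\nu\vee 1)L_u\sigma\sqrt{s\log(d/\delta)/n}$, as stated.

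Both bounds hold on the same high-probability event on which the conclusion of Theorem~\ref{thm:est_err} is valid, so no additional union bound or probability accounting is required. I do not anticipate a genuine obstacle here; the only mildly delicate point is the appearance of the factor $R$ in the covariance bound, which is a direct consequence of needing to control the ``sum'' $\hat{\bw}_n+\bw^*$ rather than merely the ``difference'' --- a feature that is intrinsic to quadratic functionals and is handled cleanly by the feasibility constraint from the constrained $M$-estimation formulation.
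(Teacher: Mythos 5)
Your proposal is correct and is exactly the argument the paper has in mind: the paper presents the corollary as an immediate consequence of Theorem \ref{thm:est_err} via Cauchy--Schwarz, and your write-up simply fleshes out the two standard steps (the linear term by direct Cauchy--Schwarz, the quadratic term by the symmetric factorization $(\hat{\bw}_n-\bw^*)^{\top}\bSigma(\hat{\bw}_n+\bw^*)$ together with $\|\hat{\bw}_n+\bw^*\|_2\le\|\hat{\bw}_n\|_1+\|\bw^*\|_1\le 2R$ from the feasibility constraint). The probability accounting is also right, since everything holds on the single event from Theorem \ref{thm:est_err}.
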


        Combining the equivalence of $l_0$ constraints and $l_1$ regularizers with the asymptotic optimality of the latter, solving the $l_1$-regularized SAA problem is provably equivalent to solving the $l_0$-constrained EUM problem. It is beneficial in two ways.
        Firstly, we no longer deal with expectation as in the EUM problem, which requires the knowledge of the distribution of $\bm{\cX}$.
        Secondly, we work on a much easier convex optimization problem instead of a nonconvex problem.
        Thus, projected gradient descent (Section \ref{subsec:algo}) can be efficiently employed and safe screening (Section \ref{subsec:screen}) can be used to filter unimportant assets and speed up the process.

    \subsection{Feature Screening}\label{subsec:screen}
        Turning the nonconvex optimization problem \eqref{opt:l0_constrained} into the convex optimization problem \eqref{opt:l1_regularized} is beneficial for computation.
        However, it is still computationally challenging when the dimension $d$ is large.
        Fortunately, the screening techniques developed for the convex optimization problem can be used to alleviate this problem.
        Building on the idea of \citet{dantas2021expanding}, we propose safe screening methods for the problem \eqref{opt:l1_regularized} and further derive efficient primal updates.
        We only need the following common assumption about the objective function $h(\bw)$ for the remaining context to establish our theoretical results.
        \begin{assumption}[Convexity and Smoothness]\label{ass:conv_smooth}
            The objective function $h:\RR^d\rightarrow\RR$ is proper convex and has Lipschitz continuous gradients.
        \end{assumption}
        Similar to Lasso problems, the maximum regularization parameter $\lambda_{\max}\triangleq\|\nabla h(\zero_d) \|_{\infty}$ makes the solution to the optimization problem \eqref{opt:l1_regularized} all zeros.
        Hence, we would focus on the case when $\lambda<\lambda_{\max}$.
        The convexity property naturally yields the primal and dual formulations of the convex optimization problem.
        \begin{theorem}[Fenchel-Rockafellar Duality]\label{thm:primal_dual}
            The primal and dual formulations of the optimization problem \eqref{opt:l1_regularized} is given by
            \begin{align*}
                \bw^*&\in \argmin_{\bw\in \cC_P} \cP_{\lambda}(\bw) \triangleq  - \frac{1}{n}\sum\limits_{j=1}^n u(\bw^{\top}\bX_j)+\lambda \|\bw\|_1,\quad
                \btheta^* \in \argmax\limits_{\btheta\in \cC_D}\cD_{\lambda}(\btheta) \triangleq   \frac{1}{n}\sum\limits_{j=1}^n u^*( n\lambda\theta_j),
            \end{align*}
            where $u^*$ is the convex conjugate of $u$, $\cC_P=\dom(\cP_{\lambda})\cap\RR^d_+$ and $\cC_D=\dom(\cD_{\lambda})\cap \{\btheta\in\RR^n\mid  \|\phi(\bX^{\top}\btheta)\|_{\infty}\leq 1\}$ are the primal and dual feasible sets and the function $\phi(x)=\max\{x,0\}$ and inequality relation ``$\leq$'' are applied element-wisely.
            For any primal-dual feasible point $(\bw^*,\btheta^*)\in \cC_P \times \cC_D$, the optimality condition reads that,
            \begin{flalign}
                \text{(Primal-Dual Link)} &&  &-\lambda \btheta^*=\nabla_{\bz}H(\bX\bw^*) \label{eq:primal_dual_link}\\
                \text{(Subdifferential Inclusion)} && &\begin{cases}
                \|\phi(\bX_{\cdot j}^{\top}\btheta^*)\|_{\infty}\leq 1,&w_j^*=0\\
                \|\phi(\bX_{\cdot j}^{\top}\btheta^*)\|_{\infty}=1,\ \btheta^{*\top}\bX_{\cdot j} w_j^*=|w_j^*|,&w_j^*\neq0.
                \end{cases} \label{eq:subdiff}
            \end{flalign}
        \end{theorem}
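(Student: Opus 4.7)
The plan is to cast $\cP_\lambda$ as a standard composite primal, $\cP_\lambda(\bw) = H(\bX\bw) + F(\bw)$, with $H(\bz) = -\frac{1}{n}\sum_{j=1}^n u(z_j)$ proper convex (because $u$ is concave) and smooth, and $F(\bw) = \lambda \one^\top\bw + \iota_{\RR_+^d}(\bw)$ combining the $\ell_1$ penalty (which reduces to $\lambda \one^\top \bw$ on $\RR_+^d$) with the non-negativity indicator. Fenchel--Rockafellar duality applied to this splitting will produce exactly $\cD_\lambda$ after a single rescaling.

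First, I would introduce the splitting variable $\bz = \bX\bw$ and form the Lagrangian
\[
    L(\bw,\bz,\btheta) \;=\; H(\bz) + \lambda \one^\top \bw + \btheta^\top(\bX\bw - \bz), \qquad \bw \succeq \bm{0},\ \bz \in \RR^n,\ \btheta \in \RR^n.
\]
Minimizing over $\bw \succeq \bm{0}$ gives $\inf_{\bw \succeq \bm{0}}(\lambda\one + \bX^\top\btheta)^\top\bw$, which equals $0$ precisely when $\bX_{\cdot j}^\top\btheta \geq -\lambda$ for every $j$ and $-\infty$ otherwise. Minimizing over $\bz$ gives $-H^*(\btheta)$, which decouples coordinate-wise and, using the concave conjugate $u^*(a) = \inf_z[az - u(z)]$, evaluates to $\frac{1}{n}\sum_{j=1}^n u^*(n\theta_j)$. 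Substituting $\btheta \mapsto -\lambda\btheta$ normalizes the feasibility condition to $\bX_{\cdot j}^\top\btheta \leq 1$ for every $j$, i.e., $\|\phi(\bX^\top\btheta)\|_\infty \leq 1$, and simultaneously turns the dual objective into $\frac{1}{n}\sum_{j=1}^n u^*(n\lambda\theta_j) = \cD_\lambda(\btheta)$, recovering exactly the dual in the statement.

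Next, I would justify strong duality and attainment. Under Assumption \ref{ass:conv_smooth}, $H$ is finite on all of $\RR^n$ while $F$ is proper convex with non-empty interior, so any strictly positive $\bw$ lies in $\mathrm{ri}(\dom F)$ and maps into $\mathrm{ri}(\dom H)$; this is the standard Fenchel--Rockafellar qualification and yields $\min \cP_\lambda = \max \cD_\lambda$ together with the existence of a saddle point $(\bw^*, \btheta^*)$. Writing the KKT conditions of $L$: stationarity in $\bz$ gives $\nabla H(\bX\bw^*) = \btheta^*_{\mathrm{old}}$, which under the $\btheta \mapsto -\lambda\btheta$ rescaling becomes $-\lambda\btheta^* = \nabla_{\bz}H(\bX\bw^*)$, exactly the primal-dual link \eqref{eq:primal_dual_link}. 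Stationarity in $\bw$ combined with complementary slackness for $\bw \succeq \bm{0}$ reads $w_j^*(\lambda + \bX_{\cdot j}^\top\btheta^*_{\mathrm{old}}) = 0$ with $\lambda + \bX_{\cdot j}^\top\btheta^*_{\mathrm{old}} \geq 0$; translating back, if $w_j^* = 0$ only $\bX_{\cdot j}^\top\btheta^* \leq 1$ is forced, equivalent to $\|\phi(\bX_{\cdot j}^\top\btheta^*)\|_\infty \leq 1$, while $w_j^* \neq 0$ forces equality $\bX_{\cdot j}^\top\btheta^* = 1$, hence $\phi(\bX_{\cdot j}^\top\btheta^*) = 1$ and $\btheta^{*\top}\bX_{\cdot j} w_j^* = w_j^* = |w_j^*|$ (using $w_j^* \geq 0$ from the primal constraint), recovering \eqref{eq:subdiff}.

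The only real subtlety is bookkeeping rather than any single deep step. Because $u$ is concave, the symbol $u^*$ in the statement must be interpreted as the concave conjugate ($u^*(a) = \inf_z[az - u(z)]$, equivalently $-(-u)^*(-a)$), and the dual multiplier must be rescaled by the factor $-\lambda$ in order to absorb $\lambda$ simultaneously into the dual objective and into the feasibility constraint. Once those conventions are fixed, both the primal-dual pair $(\cP_\lambda, \cD_\lambda)$ and the two optimality conditions \eqref{eq:primal_dual_link}--\eqref{eq:subdiff} drop out of a direct application of Fenchel--Rockafellar duality together with the KKT conditions.
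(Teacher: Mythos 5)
Your proposal is correct and follows essentially the same route as the paper's proof: introduce the splitting variable $\bz=\bX\bw$, dualize via the Lagrangian, evaluate the conjugate of $H$ coordinate-wise (with the $\lambda$-rescaling of the multiplier) and the conjugate of the nonnegativity-constrained $\ell_1$ term to obtain $\cC_D$, then read off \eqref{eq:primal_dual_link} and \eqref{eq:subdiff} from the first-order optimality conditions. The only differences are cosmetic --- you minimize the linear term directly over $\RR_+^d$ where the paper invokes the infimal convolution of conjugates, and you add an explicit constraint-qualification remark; just watch the intermediate sign in $-H^*(\btheta)=\frac{1}{n}\sum_j u^*(-n\theta_j)$ before the $\btheta\mapsto-\lambda\btheta$ substitution, which you state with the wrong sign although your final rescaled expression is correct.
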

        By Theorem \ref{thm:primal_dual}, we can compute the duality gap
        $\Gap_{\lambda}(\bw,\btheta)\triangleq\cP_{\lambda}(\bw)-\cD_{\lambda}(\btheta)$, which is nonnegative due to the weak duality property and quantifies the suboptimality of the feasible primal-dual pair for the optimization problem \eqref{opt:l1_regularized}.
        Moreover, the duality gap can be used to construct a safe region that contains the optimal dual variables, as shown in Theorem \ref{thm:safe_region}.
        By \citep[Theorem 6]{ndiaye2017gap}, $\cD_{\lambda}$ is strongly concave if $u_i$'s are differentiable with Lipschitz gradients.
        Thus, under Assumption \ref{ass:conv_smooth}, the constructed sphere is guaranteed to be safe and called \emph{gap safe sphere}.
        To obtain feasible dual variables, we can apply dual scaling
        $\bXi(\bz)= \bz/\max\{\|\phi(\bX^{\top}\bz)\|_{\infty},1\}$ to $\btheta$ so that a meaningful safe region can be computed.

        \begin{theorem}[Safe Region]\label{thm:safe_region}
            For any primal-dual feasible point $(\bw,\btheta)\in \cC_{P}\times\cC_{D}$, we have that $\btheta^*\in \cB(\btheta, r)$ where $r=\sqrt{2\Gap_{\lambda}(\bw,\btheta)/\alpha}$ if $\cD_{\lambda}$ is $\alpha$-strongly concave.
            Furthermore, the safe screening rule for each feature $j\in[d]$ is given by ``$\phi(\bX_{\cdot j}^{\top}\btheta)+r\|\bX_{\cdot j}\|_{2}<1
            \ \Longrightarrow\ w^*_j=0.$''
        \end{theorem}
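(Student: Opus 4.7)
The plan is to split the statement into two parts: (i) the sphere $\cB(\btheta,r)$ is safe, and (ii) the component-wise screening rule is valid. Both parts only require what is already available, namely the $\alpha$-strong concavity of $\cD_\lambda$, weak duality, and the optimality conditions from Theorem \ref{thm:primal_dual}.

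For part (i), I would use that $\btheta^*$ is the unconstrained maximizer of $\cD_\lambda$ in its feasible set, so $\alpha$-strong concavity yields the ``no first-order term'' bound
\begin{align*}
\cD_\lambda(\btheta) \;\leq\; \cD_\lambda(\btheta^*) - \frac{\alpha}{2}\|\btheta^* - \btheta\|_2^2.
\end{align*}
Weak duality then gives $\cD_\lambda(\btheta^*) \leq \cP_\lambda(\bw^*) \leq \cP_\lambda(\bw)$ because $\bw$ is primal feasible, so
\begin{align*}
\frac{\alpha}{2}\|\btheta^* - \btheta\|_2^2 \;\leq\; \cD_\lambda(\btheta^*) - \cD_\lambda(\btheta) \;\leq\; \cP_\lambda(\bw) - \cD_\lambda(\btheta) \;=\; \Gap_\lambda(\bw,\btheta),
\end{align*}
which rearranges exactly to $\|\btheta^* - \btheta\|_2 \leq r = \sqrt{2\Gap_\lambda(\bw,\btheta)/\alpha}$. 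The technicality here is ensuring the strong concavity inequality is applied at the maximizer with zero subgradient; this is standard but I would mention that $\btheta^*$ lies in the open relative interior or invoke the subdifferential form of strong concavity to avoid boundary issues.

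For part (ii), I would prove the contrapositive: if $w_j^* \neq 0$ then the subdifferential inclusion in \eqref{eq:subdiff} forces $\phi(\bX_{\cdot j}^\top \btheta^*) = 1$. Writing $\btheta^* = \btheta + r\bv$ with $\|\bv\|_2 \leq 1$ (available from part (i)), I would exploit two elementary facts about $\phi(x) = \max\{x,0\}$: subadditivity, $\phi(a+b) \leq \phi(a)+\phi(b)$, and $\phi(c) \leq |c|$. Combined with Cauchy--Schwarz,
\begin{align*}
\phi(\bX_{\cdot j}^\top \btheta^*) \;\leq\; \phi(\bX_{\cdot j}^\top \btheta) + \phi(r\, \bX_{\cdot j}^\top \bv) \;\leq\; \phi(\bX_{\cdot j}^\top \btheta) + r\|\bX_{\cdot j}\|_2 \|\bv\|_2 \;\leq\; \phi(\bX_{\cdot j}^\top \btheta) + r\|\bX_{\cdot j}\|_2.
\end{align*}
Hence whenever the hypothesis $\phi(\bX_{\cdot j}^\top \btheta) + r\|\bX_{\cdot j}\|_2 < 1$ holds, we conclude $\phi(\bX_{\cdot j}^\top \btheta^*) < 1$, contradicting the active case of \eqref{eq:subdiff}, which forces $w_j^* = 0$.

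The main obstacle I anticipate is the correct handling of the one-sided positive-part function $\phi$ rather than the usual absolute value that appears in standard Lasso safe screening, because the primal feasible set is $\RR_+^d$ instead of $\RR^d$. The subadditivity bound above is the right replacement for the triangle inequality used in the $\ell_1$ case, and verifying it (with the edge case where $\bX_{\cdot j}^\top \btheta + r\|\bX_{\cdot j}\|_2 < 0$, which trivially gives $\phi(\bX_{\cdot j}^\top \btheta^*) = 0$) is what makes the screening rule match exactly the form stated in the theorem.
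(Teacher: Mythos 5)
Your proposal is correct and follows essentially the same route as the paper: strong concavity plus weak duality to bound $\|\btheta^*-\btheta\|_2$ by $r$, then subadditivity of $\phi$ with Cauchy--Schwarz and the subdifferential inclusion \eqref{eq:subdiff} to get the screening rule. If anything, your version of part (i) is the cleaner one, since you expand the strong-concavity inequality at the maximizer $\btheta^*$ (where the first-order term has the right sign), whereas the paper's writeup expands at $\btheta$ and asserts a sign for $\langle\nabla\cD_{\lambda}(\btheta),\btheta^*-\btheta\rangle$ that, read literally, points the wrong way.
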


        \subsection{Algorithm}\label{subsec:algo}
        
        \begin{algorithm}[t]
          \caption{Portfolio Selection on $\Delta_d$ with Feature Screening.}\label{algo:spo}
          \begin{algorithmic}[1]
            \REQUIRE The utility function $u$, the relative price matrix $\bX\in\RR^{n\times d}$, the regularization parameter $\lambda$, the step size $\iota$, and the initial value of $\bw^{(0)}$.
            \STATE Initialize the step number $t=1$, the screening set $\cS^{(0)}=\varnothing$ and the active set $\cA^{(0)}=[d]$.
            Form the primal and dual objective functions $$\cP_{\lambda}(\bw)=h(\bw)+\lambda\|\bw\|_1,\qquad\cD_{\lambda}(\btheta)=h^*(\btheta),\qquad \forall\ (\bw,\btheta)\in\cC_P\times \cC_D,$$ where $h(\bw)=H(\bX\bw)=\sum_{j=1}^nu(\bw^{\top}\bX_j)/n$.
            Set $\alpha=\max_{j}[-\nabla^2\cD_{\lambda}(\zero_n)]_{jj}$. \label{algo:init}
            \WHILE{not terminate}
                \STATE Allocate memory for $\bw^{(t)}$ (e.g. modified in place of $\bw^{(t-1)}$).

                \noindent \gray{$\diamond$ Primal Update}
                \STATE $\bw^{(t)}_{\cA^{(t)}}= \Prox_{\iota\lambda\Omega}(\bw_{\cA^{(t)}}^{(t-1)} - \iota \nabla_{\bw_{\cA^{(t)}}}h(\bw^{(t-1)}))$ where $\Omega(\bv)=\|\bv\|_1+\mathds{1}_{\RR^{|\cA^{(t)}|}_+}(\bv)$. \label{algo:primal_update}

                \noindent \gray{$\diamond$ Dual Evaluate And Scaling}
                \STATE $\btheta^{(t)}= \bXi( - \nabla_{\bz}H(\bX\bw^{(t)})/\lambda)$. \label{algo:dual_update}

                \noindent \gray{$\diamond$ Screening}
                \STATE $\Gap_{\lambda}^{(t)}= \cP_{\lambda}(\bw^{(t)})-\cD_{\lambda}(\btheta^{(t)})$.
                
                \STATE $r^{(t)}= \sqrt{2\Gap_{\lambda}^{(t)}/\alpha}$.
                \STATE $\cS^{(t)}=\cS^{(t-1)}\cup\{j\in\cA^{(t-1)}:\phi(\bX_{\cdot j}^{\top}\btheta^{(t)})+r^{(t)}\|\bX_{\cdot j}\|_{2}< 1\}$.
                \STATE $\cA^{(t)}=[d]\setminus\cS^{(t)}$.
                \STATE $\bw^{(t)}_{\cS^{(t)}}= \zero_{|\cS^{(t)}|}$.
                \STATE $t=t+1$.
            \ENDWHILE

            \STATE $\hat{\bw}=\bw^{(t)}/\|\bw^{(t)}\|_1$

            \ENSURE The portfolio allocation $\hat{\bw}$.
          \end{algorithmic}
        \end{algorithm}
        
        Our main algorithm is described in Algorithm \ref{algo:spo}.
        For any utility function $u$ satisfying Assumption \ref{ass:conv_smooth} and data matrix $\bX$, we first define the primal and dual objectives as on line \ref{algo:init} of Algorithm \ref{algo:spo}.
        We perform the primal and dual updates in each iteration, followed by the gap safe screening.
        Recall that the proximal operator is defined as $\Prox_g(x)\triangleq \argmin_y g(y)+ \|x-y\|_2^2/2$ for any function $g$.
        On line \ref{algo:primal_update}, we perform one-step gradient descent on $\bw^{(t)}$ followed by applying the proximal operator of $\Omega(\bv)=\|\bv\|_1+\mathds{1}_{\RR^{|\cA^{(t)}|}_+}(\bv)$, where $\cA^{(t)}$ is the active set of stocks in iteration $t$ and $\mathds{1}_{\cC}(\bv)$ is the indicator function on a set $\cC$ that takes value 0 when $\bv\in\cC$ and infinity otherwise.
        Noted that the proximal update is not the same as soft-thresholding for Lasso because of the non-negativity constraint.
        Although there is no closed-form for the proximal operator, we can still evaluate it efficiently, as discussed in Supplement Materials.
        On line \ref{algo:dual_update} of Algorithm \ref{algo:spo}, we compute the dual variable defined in \eqref{eq:primal_dual_link} and apply the scaling function $\bXi(\bz)= \bz/\max\{\|\phi(\bX^{\top}\bz)\|_{\infty},1\}$ so that the dual variable is feasible.
        Finally, we perform gap-safe screening and update the set of active coordinates.
        For the next iteration, we just need to repeat the above process for the subvector $\bw_{\cA^{(t)}}$ since any coordinate of $\bw$ in the complement of $\cA^{(t)}$ is guaranteed to be zero.
        Though Algorithm \ref{algo:spo} is general and inclusive, we will specialize it to logarithm utility and exponential utility in Section \ref{sec:utility_case}, and some valuable ingredients are summarized in Supplement Materials.

        Combining the proximal gradient descent algorithm with the screening rules, the convergence guarantee as in Theorem \ref{thm:convergence} can also be established for Algorithm \ref{algo:spo}.
        This implies that Algorithm \ref{algo:spo} has a convergence rate $\cO(1/t)$ or $\cO(1/\epsilon)$ where $t$ is the number of iterations and $\epsilon$ is the tolerance.
        We also noted that it is not necessary to perform screening at each iteration in Algorithm \ref{algo:spo}.
        So we further add an option in our implementation to run screening for every $i_s$ iteration, where $i_s$ is a parameter.
        In general, $i_s$ should not scale linearly with the maximum number of iterations so that the time consumption on screening will not be significant compared to proximal gradient descent steps.
        
        \begin{theorem}[Convergence]\label{thm:convergence}
            Suppose that Assumption \ref{ass:conv_smooth} holds and $\bw^*$ is defined in Theorem \ref{thm:primal_dual}.
            If the step size $\iota\leq 1/L_{\nabla h}$, then the estimate $\bw^{(t)}$ at the $t$-th step of Algorithm \ref{algo:spo} satisfies
            \begin{align*}
                |\cP_{\lambda}(\bw^{(t)})-\cP_{\lambda}(\bw^*)|\leq \frac{\|\bw^{(0)}-\bw^*\|_2^2}{2\iota t}.
            \end{align*}
        \end{theorem}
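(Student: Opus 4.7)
The plan is to recognize Algorithm \ref{algo:spo} (modulo the screening step) as proximal gradient descent applied to the composite objective $\cP_\lambda(\bw) = h(\bw) + \lambda\|\bw\|_1 + \mathds{1}_{\RR_+^d}(\bw)$, where the non-negativity constraint has been absorbed into the non-smooth regularizer $\Omega$. Under Assumption \ref{ass:conv_smooth}, $h$ is convex with $L_{\nabla h}$-Lipschitz gradient, and $\Omega$ is proper closed convex, so the standard theory applies. The stated rate $\cO(1/(\iota t))$ is the classical ISTA/proximal-gradient bound.

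First I would invoke the descent lemma for $L_{\nabla h}$-smooth convex $h$: for any $\bv, \bw$,
\begin{align*}
h(\bv) \leq h(\bw) + \nabla h(\bw)^\top(\bv - \bw) + \tfrac{L_{\nabla h}}{2}\|\bv - \bw\|_2^2.
\end{align*}
Applied with $\bv = \bw^{(t)}$ and $\bw = \bw^{(t-1)}$, together with the definition of the proximal update $\bw^{(t)} = \Prox_{\iota\Omega}(\bw^{(t-1)} - \iota \nabla h(\bw^{(t-1)}))$, this yields the prox inequality: for any feasible $\bw$ and any $\iota \leq 1/L_{\nabla h}$,
\begin{align*}
\cP_\lambda(\bw^{(t)}) \leq \cP_\lambda(\bw) + \tfrac{1}{2\iota}\bigl(\|\bw - \bw^{(t-1)}\|_2^2 - \|\bw - \bw^{(t)}\|_2^2\bigr).
\end{align*}
This is obtained by combining the first-order optimality of the prox (which gives a subgradient inequality for $\Omega$) with convexity of $h$ and the quadratic upper bound above; it is the key one-step inequality.

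Next I would instantiate this with $\bw = \bw^*$, giving $\cP_\lambda(\bw^{(t)}) - \cP_\lambda(\bw^*) \leq \tfrac{1}{2\iota}(\|\bw^* - \bw^{(t-1)}\|_2^2 - \|\bw^* - \bw^{(t)}\|_2^2)$. Setting $\bw = \bw^{(t-1)}$ further shows that $\cP_\lambda(\bw^{(t)})$ is monotonically non-increasing. Telescoping the first bound for $t' = 1, \ldots, t$ and dividing by $t$ gives
\begin{align*}
\cP_\lambda(\bw^{(t)}) - \cP_\lambda(\bw^*) \leq \tfrac{1}{t}\sum_{t'=1}^{t}\bigl(\cP_\lambda(\bw^{(t')}) - \cP_\lambda(\bw^*)\bigr) \leq \tfrac{\|\bw^{(0)} - \bw^*\|_2^2}{2\iota t},
\end{align*}
which is precisely the claim (the absolute value is vacuous since $\bw^{(t)}$ is primal feasible and $\bw^*$ is the minimizer).

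The only subtlety is the interaction with the safe screening step, which zeros out coordinates in $\cS^{(t)}$ and restricts updates to $\cA^{(t)}$. The main obstacle, therefore, is showing that this does not invalidate the proximal-gradient analysis on the full vector. This follows from Theorem \ref{thm:safe_region}: any screened index $j \in \cS^{(t)}$ satisfies $w^*_j = 0$, so the iterates $\bw^{(t)}$ agree with the unscreened proximal-gradient iterates on an active set that always contains the support of $\bw^*$. Consequently the one-step prox inequality above still holds with the full $\bw^*$ as the comparator, and the telescoping argument carries through unchanged.
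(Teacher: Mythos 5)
Your proposal is correct and follows essentially the same route as the paper's proof: the descent lemma for $L_{\nabla h}$-smooth $h$, convexity, and the prox optimality condition are combined into the same one-step inequality (the paper phrases it via a generalized gradient $G_\iota$ and completes the square, which is algebraically identical to your form), followed by the same instantiations at $\bv=\bw^{(t-1)}$ for monotonicity and $\bv=\bw^*$ for telescoping, and the same appeal to the safe-region theorem to argue that screening only zeros coordinates that are already zero in $\bw^*$. No gaps.
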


\section{Empirical Studies}\label{sec:empirical_study}
    \subsection{Notable Particular Cases}\label{sec:utility_case}
       Motivated by the Kelly strategy \citep{kelly}, the growth rate of stocks is defined as $\mathbb{E}[\log(\bw^{\top}\bm{\cX})]$. The strategy maximizing the growth rate is the so-called growth-optimal or log-optimal strategy. It is equivalent to maximizing the expected logarithmic utility when we set $\eta=0$. The log-optimal strategy enjoys the benefits in the long run \citep{algoet1988asymptotic}, i.e., it accumulates more return than any other portfolio with probability 1. It sounds like an encouraging result, but it is unrealistic since the investors may not have such a ``long'' lifetime \citep{rubinstein1991continuously}.
        Although the shortcomings of the log-optimal strategy cannot be ignored, the portfolio related to logarithmic utility still plays a vital role in portfolio management \citep{gyorfi2012empirical,rujeerapaiboon2016robust}.
        For numerical stability purposes and practical concerns, in practice, researchers also consider variants of the log-optimal strategy with $\eta>0$ and hyperbolic absolute risk aversion (HARA) property $-u''(z)/u'(z)=(z+\eta)^{-1}$ \citep{kroll1984mean,ccanakouglu2010portfolio}.
        In Section \ref{sec:empirical_study}, we evaluate the portfolio performance by maximizing the logarithmic expected utility based on our methods with $\eta=\eta_{\min}$ estimated from real data.

        Another class of portfolio strategy is the exponential utility portfolio, which is commonly used as its optimization will lead to celebrated mean-variance optimization problems when the returns are normally distributed. Moreover, the exponential utility function implies constant absolute risk aversion (CARA), where $a$ is a constant over time, while the logarithmic utility function decreases absolute risk aversion.
        Under normality of the return vector, the larger the risk aversion parameter $a$, the smaller holding of each asset is.
        Combining with $l_1$ regularization as in optimization problem \eqref{opt:l1_regularized}, this implies that the number of assets of the optimal portfolio will be smaller for larger $a$. For our experiments in Section \ref{sec:empirical_study}, we evaluate the portfolio performance with several risk aversion parameters $a\in\{0.05,\ 0.1,\ 0.05,\ 1,\ 1.5\}$.
        For the rest of the paper, we use LOG and EXP-$a$ to denote portfolio strategies with logarithmic and exponential utility with aversion parameter $a$, respectively.

    \subsection{Algorithm Performance on NYSE Stocks}\label{subsec:NYSE}
        In this section, we intend to examine the effect of safe gap screening on the performance of Algorithm \ref{algo:spo} regarding screening ratios, time complexity, and convergence rate in the high-dimensional scenarios on real data, while no portfolios are constructed here.
        We collect monthly returns of NYSE stocks from the Center for Research in Security Prices (CRSP) database, keep as many stocks as possible by including 3680 stocks with missing records less than of 30\% from 2016 to 2018, and then fill the missing returns with zeros.
        Finally, we get a sample matrix $\bX$ with a shape of $24\times 3680$.
        \begin{figure}[t]
            \centering
            \subfigure[{Screening ratio against iterations for $\log_{10}(\lambda/\lambda_{\max})\in[-3,0]$ for LOG and EXP-1.00 utility functions (The lighter, the more screened features).}]{
              \includegraphics[width=0.9\textwidth]{} \label{fig:simulation_a}
            }

            \subfigure[The relative execution time (the ratio of screened time to unscreened time) against $\log_{10}(\lambda/\lambda_{\max})$, and the convergence rate (duality gap) against execution time when $\lambda/\lambda_{\max}=5e-1$, for LOG and EXP-1.00 utility functions.]{
              \includegraphics[width=0.9\textwidth]{} \label{fig:simulation_b}
            }

            \caption{Simulation results on NYSE historical data when $n=24$ and $d=3680$.}
            \label{fig:simulation}
        \end{figure}

               Figure \ref{fig:simulation_a} shows the screening ratio (defined as the number of screened features divided by the total number of features) against the number of iterations.
        The colors indicate the value of the screening ratio.
        The lighter, the larger ratio of screened features is.
        We observe that most regions are nearly white, meaning that the portfolio allocation vector returned by Algorithm \ref{algo:spo} is highly sparse.
        When the level of regularization $\log_{10}(\lambda/\lambda_{\max})$ gets more prominent, so does the effect of the $l_1$ regularization.
        The result indicates that with strong regularization, Algorithm \ref{algo:spo} can distinguish the most useless features at the early stage of optimization, which benefits the optimization by focusing only on the unscreened features, as we shall see later.
        
        The first two plots of Figure \ref{fig:simulation_b} show the relative execution time (defined as the execution time of a model with screening divided by the time without screening) against $\log_{10}(\lambda/\lambda_{\max})$.
        It is clear that with more robust regularization, the optimal portfolio allocation has more zero entries, in which case our algorithm benefits more from the screening procedure.
        Thus, we see that the relative execution time decreases as $\lambda$ increases. The last plot of Figure \ref{fig:simulation_b} presents the convergence rate (duality gap) against execution time for the models with and without screening when $\lambda/\lambda_{\max}=10^{-5}$.
        We see no discernible difference in the long-run duality gap, but the unscreened algorithm takes 30 and 19 seconds, whereas the screened one takes 18 and 12 seconds to reach the convergence threshold of $10^{-6}$ for LOG and EXP-1.00, respectively, exhibiting a large amount of timesaving (about $40\%$).
        
        We note that Algorithm \ref{algo:spo} considers the specific structure of the optimization problem. 
        For example, the Hessian in \Cref{tab:ingredients} is diagonal, meaning that the second-order information can be efficiently exploited. 
        Hence, it is perceptible that our algorithm is more efficient than generic convex solvers.
        The experiment results in Supplement Materials (Section E) also confirm the efficiency of our algorithm compared to generic convex solvers.

    \subsection{Portfolio Selection on S\&P 500 Index Constituents}
        \label{subsec:SP500}

        We first investigate the performance of different strategies on the S\&P 500 index constituents.

        \subsubsection{Comparing with Benchmark Portfolios} \label{subsec:benchmark}
            We compare our methods with the following three types of portfolio strategies:
            (1) Equally-weighted (EW) portfolio.
            (2) Global minimum variance (GMV) portfolio using sample covariance matrix (GMV-P), the linear shrinkage estimation of the covariance matrix $\bSigma$ (GMV-LS), and non-linear shrinkage estimation of $\bSigma$ (GMV-NLS).
            (3) Mean-variance (MV) portfolios with the covariance estimators analogously (MV-P, MV-LS, and MV-NLS).
            \begin{align}
                \min\limits_{\bw\in \Delta_d}&\quad \bw^{\top}\hat{\bSigma}\bw,\tag{GMV Portfolio}\label{opt:GMV}\\
                \min\limits_{\bw\in \Delta_d}&\quad -\bw^{\top}\hat{\bmu} + \lambda_{\text{MV}}\bw^{\top}\hat{\bSigma}\bw.  \tag{MV Portfolio}\label{opt:MV}
            \end{align}

            We perform the out-of-sample portfolio evaluation, beginning in 2011 and rebalancing per quarter (63 trading days, denote d by $n_{\text{hold}}$).
            The new model parameters are estimated from 120 training samples before the rebalancing trade day. The resulting sample matrices are of shape $120\times d$, where $d$ varies in different periods (median:436, maximum:454), which is naturally challenging for estimation and optimization because of its high dimensionality (the number of features is much larger than the number of samples).
            We use 5-fold time-series cross-validation to select the hyperparameters, analogous to $K$-fold cross-validation without shuffling but only with training samples prior to the evaluation set in each fold.   
            The regularization parameter $\lambda$ from
            the set of 100 points spaced evenly on a log scale from $\log_{10}(\lambda_{\max})-2$ to $\log_{10}(\lambda_{\max})$ for our models, and
            $\lambda_{\text{MV}}$ from
            the set of 100 points uniformly distributed in spaced evenly on a log scale from $-3$ to $2$ for MV portfolios.

           For our models, we set the maximum iterations to be $10^{4}$ and the convergence tolerance to be $10^{-5}$ during cross-validation.
            After the parameter $\lambda$ is selected from cross-validation, we set them to $10^{5}$ and $10^{-8}$, respectively.
            We perform screening every $30$ iterations for all the experiments in this section.
            For GMV and MV portfolios, we use solver \textsc{Cvxopt} \citep{andersen2015cvxopt} to solve the corresponding convex optimization problems \eqref{opt:GMV} and \eqref{opt:MV} by quadratic programming.
            The parameters for the solver are specified as: \texttt{abstol}=$10^{-12}$, \texttt{reltol}=$10^{-11}$, \texttt{maxiters}=$10^4$, \texttt{feastol}=$10^{-16}$.

            \begin{table}[!t]\footnotesize 
                \begin{tabularx}{\textwidth}{lPPPPP}
                    \toprule
                    \textbf{Method} & \textbf{Return} &  \textbf{Maximum Drawdown} & \textbf{Sharpe Ratio} & \textbf{Sortino Ratio} & \textbf{Avg. Num. of Assets}\\
                    \midrule
                    Benchmark &&&\\
                    \hline
                    \hline\addlinespace[0.2em]
                    EW & 2.3231 & 0.3941 & 0.7084 & 0.9848 & 437 \\
                    GMV-P & 2.1061 & 0.2483 & 0.9487 & 1.4924 & 281 \\
                    GMV-LS & 2.4796 & 0.2929 & 0.9800 & 1.4648 & 258 \\
                    GMV-NLS & 2.3775 & 0.3223 & 0.9410 & 1.3577 & 323 \\
                    MV-P & 1.5481 & 0.2343 & 0.5287 & 0.7654 & 211 \\
                    MV-LS & 1.9540 & 0.2363 & 0.5981 & 0.8735 & 170 \\
                    MV-NLS & 4.0390 & 0.2341 & 0.8681 & 1.2980 & 118 \\
                    \midrule
                    Our methods &&&\\
                    \hline
                    \hline\addlinespace[0.2em]
                    LOG-1.00 & 8.9922 & 0.3194 & 0.9953 & 1.5743 & 20 \\
                    EXP-0.05 & 6.2313 & 0.3504 & 0.8310 & 1.4563 & 37 \\
                    EXP-0.10 & 5.7691 & 0.3027 & 0.8729 & 1.3894 & 23 \\
                    EXP-0.50 & 8.5086 & 0.3558 & 0.9509 & 1.5421 & 7 \\
                    EXP-1.00 & 8.7293 & 0.3558 & 0.9777 & 1.5679 & 6 \\
                    EXP-1.50 & 8.1633 & 0.3558 & 0.9637 & 1.5077 & 5 \\
                    \bottomrule
                \end{tabularx}
                \caption{Out-of-sample results (without transaction fees) on S\&P 500 from 2011 to 2020.}\label{tab:sp500}
            \end{table}

          We measure the out-of-sample performance of each model by three indicators on the daily returns: (1) the accumulated return (RET)\footnote{We obtain the risk-free rate from the Kenneth French Data Library to compute excess returns.}, measuring how much wealth the portfolio strategy earns; (2) the maximum drawdown (MDD), the worst possible risk the investors may face when adopting the strategy; (3) the annualized Sharpe ratio (SR), describing how much excess returns relative to its volatility the portfolio offers and computed from (out-of-sample) daily returns; and (4) the annualized Sortino ratio (SoR), which replaces the overall volatility in SR with the downside deviation \citep{foster2003performance}.

            The results of the S\&P 500 datasets are summarized in Table \ref{tab:sp500}.
            The baseline portfolio strategy EW invests every available asset with the same amount.
            We see that EW has a return of 2.3231, which is 0.99 more than the excess return of the S\&P index (whose open and close prices in the evaluation period are 1257.62 and 3756.07, respectively) because we have filtered stocks not listed in the index or having missing records in the preceding year.
            However, the EW strategy has the most significant drawdown among all methods, indicating the validity of both MV-based strategies and our methods.
            Among the benchmark methods, we observe that the GMV-NLS and MV-NLS strategies perform the best in SR and SoR.
            Our methods outperform them slightly in SR while significantly in SoR.
            Especially, the LOG strategy outperforms the MV-P strategy with an increase of $0.56$ (and $0.98$) in annualized SR (and SoR), indicating $56\%$ (and 98\%) excess return adjusted for risks per year.
            For the EXP-$a$ portfolios, the average number of holding assets decreases as $a$ increases, but the out-of-sample return-risk efficiency is not affected too much.

            \begin{figure}[t]
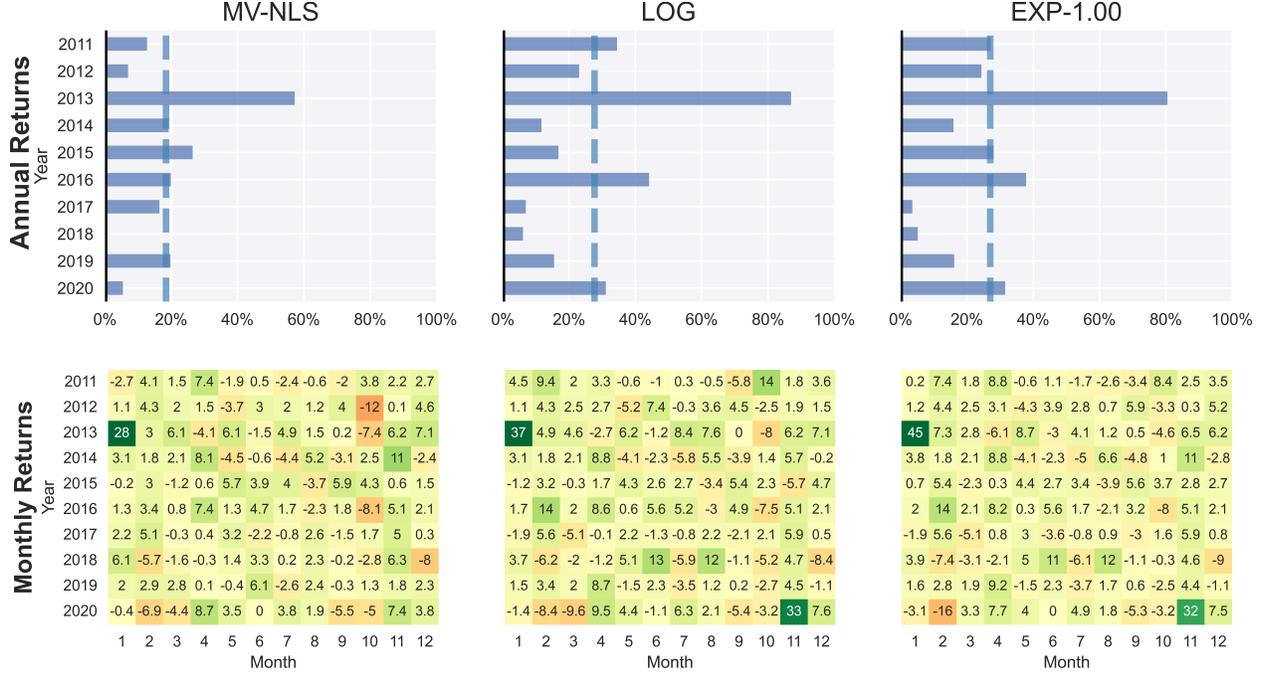

                \centering
                  \includegraphics[width=\textwidth]{img/sp500_returns_1.pdf}
                  \vspace{0.5mm}

                  \includegraphics[width=0.98\textwidth]{img/sp500_returns_2.pdf}\hspace{2mm}
                \caption{Annual and monthly returns on S\&P 500 from 2011 to 2020 with $n=120$, $d \le 454$, $n$\textsubscript{hold}$=63$ for MV-NLS, LOG and EXP-1.00. The dash lines denote the mean annual returns.}
                \label{fig:sp500_retruns}
            \end{figure}

            To evaluate the portfolio strategies in detail, we further visualize the annual and monthly returns for the best benchmark method (MV-NLS) with the most negligible drawdown and our methods (LOG and EXP-1.00) in Figure \ref{fig:sp500_retruns}.
            The coarse-grained plots of annual returns show how each strategy performs over the years.
            The dashed line indicates the average annual return.
            We noted that the huge profit in 2013 for all strategies came from holding the Netflix shares from January 4, 2013, to March 7, 2013.
            More specifically, the Netflix shares surged on surprise profit for about 70\% on January 23, 2013.
            Nevertheless, our methods still perform better than MV-NLS even if we exclude this abnormal period.

            The fine-grained plots of monthly returns show that the impacts of the market on the effectiveness of all strategies are similar.
            For example, all three methods failed to profit from June to July 2017 and from January to February 2020, when the US stock market faced downside risks.
            From the plots, we can also see that the LOG strategy is less risk-averse because it gains enormous profits for some months and loses more for others.
            In the long run, however, it earns 50\% more than MV-NLS in ten years.

            To consider the transaction fees, we define the net returns as:
        \begin{align*}
            r_{\text{net}}^{(t)} = \left(1-\sum\limits_{j=1}^d c|w_{j}^{(t)}-w_{j}^{(t-1)}|- \sum\limits_{j=1}^d c'\mathds{1}_{\{w_{j}^{(t)}\neq w_{j}^{(t-1)}\}}\right)(1+r^{(t)}) -1,
        \end{align*}
        where $\bw^{(t)}$ and $r^{(t)}$ are the allocation vector and the return during the $t$th trading period, and the constant rate $c$ is chosen to be $0.1\%$ as suggested by \citet{robert2012measuring} and $c'$ is chosen to be $0.001\%$ to account for trading activity fees.
        As shown in Table \ref{tab:sp500_tranfee}, the LOG strategy and most EXP strategies outperform benchmark methods when considering transaction fees.

        \begin{table}[!t]\footnotesize 
            \begin{tabularx}{\textwidth}{lPPPPP}
                \toprule
                \textbf{Method} & \textbf{Return} &  \textbf{Maximum Drawdown} & \textbf{Sharpe Ratio} & \textbf{Sortino Ratio} & \textbf{Avg. Num. of Assets}\\
                \midrule
                Benchmark &&&\\
                \hline
                \hline\addlinespace[0.2em]
                EW & 1.8648 & 0.3941 & 0.6283 & 0.8708 & 437 \\
                GMV-P & 1.5428 & 0.2483 & 0.7805 & 1.2172 & 281 \\
                GMV-LS & 1.8424 & 0.2929 & 0.8220 & 1.2208 & 258 \\
                GMV-NLS & 1.7499 & 0.3223 & 0.7838 & 1.1240 & 323 \\
                MV-P & 1.0943 & 0.2415 & 0.4336 & 0.6252 & 211 \\
                MV-LS & 1.4417 & 0.2550 & 0.5062 & 0.7363 & 170 \\
                MV-NLS & 3.3271 & 0.2341 & 0.7924 & 1.1809 & 118 \\\midrule
                Our methods &&&\\
                \hline
                \hline\addlinespace[0.2em]
                LOG-1.00 & 8.1604 & 0.3240 & 0.9613 & 1.5183 & 20 \\
                EXP-0.05 & 5.5768 & 0.3626 & 0.7961 & 1.3927 & 37 \\
                EXP-0.10 & 5.2082 & 0.3073 & 0.8375 & 1.3311 & 23 \\
                EXP-0.50 & 7.7953 & 0.3558 & 0.9215 & 1.4922 & 7 \\
                EXP-1.00 & 8.0122 & 0.3558 & 0.9481 & 1.5184 & 6 \\
                EXP-1.50 & 7.4773 & 0.3558 & 0.9333 & 1.4584 & 5 \\
                \bottomrule
            \end{tabularx}
            \caption{Out-of-sample results (with transaction fees) on S\&P 500 from 2011 to 2020.}
            \label{tab:sp500_tranfee}
        \end{table}

        \subsubsection{Portfolio with Cardinality Constraints}\label{subsubsec:sp500_card}

            For real-life asset allocations, investors have to face more significant systematic risks to have potentially higher returns, but they can reduce the exposure to unsystematic risks by diversifying their investments.
            While there is no single answer to what is a reasonable number, one can always explore possible numbers through historical data.
            We evaluate our methods on the same data set in Section \ref{subsec:benchmark} but with restrictions on the asset allocations.
            Formally, we aim to solve the problem \eqref{opt:l0_constrained} given different numbers of assets $s$.
            Due to the equivalence relationship, we can portray the Lasso-type path efficiently for the problem \eqref{opt:l1_regularized}.

            Figure \ref{fig:sp500_n} shows the portfolio results constrained on the maximum number of assets.
            We can see that both logarithm and exponential utility functions have similar performance when restricting the maximum number of assets to be invested.
            For most strategies, the annualized SRs are more than 0.9 when $s$ is larger than 20 and become relatively stable as $s$ further increases.
            From the plots of the average number of assets, we can also see how the risk-aversion parameter of the exponential utility influences the portfolio allocations.
            As the risk-aversion parameter increases, the portfolio strategy tends to put fewer investments on risky assets, and thus fewer assets will be included when the regularization weakens.

            \begin{figure}[t]
                \centering
                  \includegraphics[width=\textwidth]{}
                \caption{Portfolio performances on S\&P 500 historical data from 2011 to 2020 constrained on maximum number of assets, when $n=120$, $d\leq 454$ and $n$\textsubscript{hold}$=63$ days.}
                \label{fig:sp500_n}
            \end{figure}

    \subsection{Portfolio Selection on Russell 2000 Index Constituents}\label{subsec:russell}
        When faced with larger stock pools and longer time horizons, we analyze our portfolio techniques in this subsection to see how they fare.
        We execute the out-of-sample portfolio evaluation in the same way as in Section \ref{subsec:SP500}.
        The subpool of stocks with complete data in the preceding half year at the beginning of January, April, July, and October is selected. 
        The median and the maximum value of the numbers of stocks in the pool are 1626 and 1804,  respectively.
        The allocation is then determined using a 5-fold time-series cross-validation from the prior half year, and the assets are held until the first trade day of the next quarter. 

        Such a large portfolio problem is very challenging in practice.
        What is worse, data-driven methods based on historical data tend to produce highly sparse allocations, especially when facing the appearance of outliers.
        It happens more commonly in small-cap stocks, which fluctuate much more violently than large-cap stocks.
        To deal with this issue, we clip the stock relative prices within the top and bottom $2.5\%$ quantiles of the overall empirical distribution to exclude extreme values, which we find effective for all algorithms.
        We restrict the number of selected assets to be no less than a prespecified threshold $n_{\min}=100$.

        Their out-of-sample performances are again compared with benchmark portfolios as in Section \ref{subsec:benchmark}.
        As shown in Table \ref{tab:russell2000}, the MV methods fail to profit for the long run, and the EW and the GMV methods obtain good mean-variance efficiency. It coincides with \cite{demiguel2009optimal} that it is hard to beat the EW portfolio in the long run.
        Our methods are on par with GMV methods, while we only require half of the assets on average.
        Furthermore, our methods excel the GMV methods when the transaction fees are considered (\Cref{tab:russell2000_tranfee}) because the overall composition of the GMV portfolios between rebalancing periods is changing dramatically, indicating that choosing a suitable number of assets is beneficial.

        \begin{table}[!t]\footnotesize 
            \begin{tabularx}{\textwidth}{lPPPPP}
                \toprule
                \textbf{Method} & \textbf{Return} &  \textbf{Maximum Drawdown} & \textbf{Sharpe Ratio} & \textbf{Sortino Ratio} & \textbf{Avg. Num. of Assets}\\
                \midrule
                Benchmark &&&\\
                \hline
                \hline\addlinespace[0.2em]
                EW & 3.1023 & 0.6125 & 0.4176 & 0.5948 & 1640 \\
                GMV-P & 1.2725 & 0.2677 & 0.4122 & 0.5948 & 838 \\
                GMV-LS & 1.7997 & 0.3725 & 0.4402 & 0.6251 & 877 \\
                GMV-NLS & 1.5940 & 0.4502 & 0.3687 & 0.5247 & 820 \\
                MV-P & -0.7531 & 0.9398 & -0.0408 & -0.0554 & 302 \\
                MV-LS & -0.8415 & 0.9530 & -0.0879 & -0.1195 & 351 \\
                MV-NLS & -0.8987 & 0.9722 & -0.1180 & -0.1590 & 406 \\\midrule
                Our methods &&&\\
                \hline
                \hline\addlinespace[0.2em]
                LOG & 3.4483 & 0.5473 & 0.4514 & 0.6330 & 160 \\
                EXP-0.05 & 1.8137 & 0.4942 & 0.3738 & 0.5135 & 307 \\
                EXP-0.10 & 1.9293 & 0.4940 & 0.3866 & 0.5313 & 332 \\
                EXP-0.50 & 1.9791 & 0.4868 & 0.3953 & 0.5432 & 314 \\
                EXP-1.00 & 1.9514 & 0.4830 & 0.3929 & 0.5397 & 308 \\
                EXP-1.50 & 2.0456 & 0.4753 & 0.4076 & 0.5602 & 255 \\
                \bottomrule
            \end{tabularx}
            \caption{Out-of-sample results (without transaction fees) on Russell 2000 from 2005 to 2020.}\label{tab:russell2000}
        \end{table}
        
        We further explore augmenting portfolio performance with factor signals following \citep{ledoit2017nonlinear}. The factor signals are exposure to the stocks and may imply the research on efficient market as it enhances to explain cross-sectional stock returns. By now, many factors are documented in academia or constructed by the industry. We examine two simple factors, the moving average Sharpe ratio and the moving average relative strength index (RSI). The definitions of the two factors are detailed in Supplement Materials.
        The results are summarized in \Cref{tab:russell2000_factor_SR} and \Cref{tab:russell2000_factor_RSI}
        (without transaction fees) and \Cref{tab:russell2000_tranfee_factor_SR} and \Cref{tab:russell2000_tranfee_factor_RSI} (with transaction fees).
        We observed that the SRs and SoRs of the GMV portfolios considerably deteriorate because they are only concerned about minimizing the variation of the factor of the selected assets.
        On the contrary, the MV portfolios and our methods gain significant improvements compared with portfolios without factors.
        In most cases, our methods outperform the EW strategy containing 1,640 stocks on average in terms of all metrics.

        In addition to the portfolio performance, we are also interested in the time consumption of MV portfolios and our method when the asset number is much larger than the sample number. We choose the period when 1615 stocks are included in the training set of 128 samples. It takes about 60 seconds each fold each $\lambda_{\text{MV}}$ for MV-P to complete 1,000 iterations on average, while only 5 seconds each fold each $\lambda$ for LOG to complete 10,000 iterations, showing computational benefits of our methods in the ultra-high dimensional settings.
        
        \begin{table}[!t]\footnotesize 
            \begin{tabularx}{\textwidth}{lPPPPP}
                \toprule
                \textbf{Method} & \textbf{Return} &  \textbf{Maximum Drawdown} & \textbf{Sharpe Ratio} & \textbf{Sortino Ratio} & \textbf{Avg. Num. of Assets}\\
                \midrule
                Benchmark &&&\\
                \hline
                \hline\addlinespace[0.2em]
                EW & 3.1023 & 0.6125 & 0.4176 & 0.5948 & 1640 \\
                GMV-P & 0.1927 & 0.7264 & 0.1324 & 0.1925 & 813 \\
                GMV-LS & 0.6516 & 0.6640 & 0.2030 & 0.2906 & 806 \\
                GMV-NLS & 0.4707 & 0.6798 & 0.1755 & 0.2518 & 657 \\
                MV-P & 0.3664 & 0.5924 & 0.1728 & 0.2480 & 243 \\
                MV-LS & 0.6889 & 0.5837 & 0.2202 & 0.3150 & 275 \\
                MV-NLS & 0.5009 & 0.5765 & 0.1969 & 0.2819 & 330 \\ \midrule
                Our methods &&&\\
                \hline
                \hline\addlinespace[0.2em]
                LOG & 4.2410 & 0.5187 & 0.5009 & 0.7081 & 124 \\
                EXP-0.05 & 3.5765 & 0.5331 & 0.4678 & 0.6609 & 277 \\
                EXP-0.10 & 3.5342 & 0.5383 & 0.4656 & 0.6577 & 284 \\
                EXP-0.50 & 3.0583 & 0.5511 & 0.4382 & 0.6168 & 202 \\
                EXP-1.00 & 2.9818 & 0.5472 & 0.4332 & 0.6098 & 298 \\
                EXP-1.50 & 3.0385 & 0.5539 & 0.4365 & 0.6141 & 245 \\
                \bottomrule
            \end{tabularx}
            \caption{Out-of-sample results (without transaction fees) on Russell 2000 with SR factors.}\label{tab:russell2000_factor_SR}
        \end{table}
        
        \begin{figure}[t]
            \centering
              \includegraphics[width=\textwidth]{}
            \caption{Portfolio performance on Russell 2000 historical data from 2005 to 2020 constrained on maximum number of assets, when $n=6$ months, $d\leq 1804$ and $n$\textsubscript{hold}$=3$ months.}
            \label{fig:russell2000_n}
        \end{figure}

        Finally, our algorithm is also flexible for producing portfolios with the desired number of holding assets.
        As shown in Figure \ref{fig:russell2000_n}, when imposing different cardinality constraints, the out-of-sample performances of our methods have similar patterns as seen in Section \ref{subsubsec:sp500_card}.
        For EXP portfolios, the aversion parameter also controls the maximum number of assets to hold.
        Overall, more assets help minimize risks but do not necessarily benefit more in terms of SRs and SoRs.
        Hence, different investors with different risk tolerance preferences can decide suitable choices of $s$ for their own based on historical data.

\section{Conclusion}\label{sec:conclusion}

    Strategies for high-dimensional portfolio management are becoming increasingly important as the global financial market expands. This portfolio management study provides new perspectives on the mean-variance efficiency, where an appropriate number of holding assets can facilitate a better trade-off between return and risk with more stability. 
    When the number of holding assets is chosen in a purely data-driven manner based on historical data, our portfolio strategies are on par with the best benchmark GMV portfolio strategies in Sharpe ratios and Sortino ratios, while only requiring less than half of assets as them, on both S\&P 500 and Russell 2000 datasets.
    The proposed strategies significantly outperform all benchmark methods when considering transaction costs.
    When the number of holding stocks is fixed in advance, we find that about 10$\sim$30 assets can maintain well-diversified portfolios on S\&P 500 datasets. At the same time, 160$\sim$300 are needed for Russell 2000 datasets, depending on the utility function one uses.
    To the best of our knowledge, this is the first case study concerning such a large number of assets in the literature on portfolio management.
    The results indicate that sparsity-induced portfolios will not ruin the diversification of the constrained portfolio with increasing risks. 
    For example, on Russell 2000, our best portfolio profits as much as the EW portfolio but reduces the maximum drawdown and the average number of assets by 10\% and 90\% respectively.
    Finally, the feasibility and stability of the proposed strategies with factor augmentation are verified.

    Focusing on the challenges in the ultrahigh-dimensional scenarios, we propose a general strategy based on the SAA formulation of EUM with cardinality constraints. 
    Our proposed strategy is effective even when the number of assets is far more than the sample size because it precludes the estimation error of the moments, a sore point of the mean-variance approach for large portfolios, and gives consistent solutions to the EUM problem. 
    The strategy is justified by the equivalence of the $l_1$-regularized problem on the nonnegative orthant and the $l_0$-constrained problem on probability simplex under mild conditions. 
    By working with this flexible convex optimization problem, we can identify a small subset of diversified assets with superior out-of-sample portfolio performance. Hence, we present a theoretically sound and computationally efficient strategy to make high-dimensional PM reliable and actionable in the rapidly growing financial asset market.

    Albeit the present study demonstrates successful examples in constructing diversified portfolios with our SAA-based method, the limitations of our study remain to be addressed. For example, almost all data-driven methods are susceptible to outliers when facing higher volatility. One possible direction is to incorporate robust optimization techniques into our analysis framework or use outlier detection algorithms to remove unwanted samples. Another limitation is that the current work considers long-only portfolios, while allowing short sales may be an important direction for further work. 
    Besides, we may introduce liquidity risk \citep{acharya2005asset}, or consider different types of investors.

\bibliographystyle{apalike}
\bibliography{ref}

\begin{thebibliography}{}

\bibitem[Acharya and Pedersen, 2005]{acharya2005asset}
Acharya, V.~V. and Pedersen, L.~H. (2005).
\newblock Asset pricing with liquidity risk.
\newblock {\em Journal of Financial Economics}, 77(2):375--410.

\bibitem[Agrawal et~al., 2018]{agrawal2018rewriting}
Agrawal, A., Verschueren, R., Diamond, S., and Boyd, S. (2018).
\newblock A rewriting system for convex optimization problems.
\newblock {\em Journal of Control and Decision}, 5(1):42--60.

\bibitem[Algoet and Cover, 1988]{algoet1988asymptotic}
Algoet, P.~H. and Cover, T.~M. (1988).
\newblock {Asymptotic Optimality and Asymptotic Equipartition Properties of
  Log-Optimum Investment}.
\newblock {\em The Annals of Probability}, 16(2):876--898.

\bibitem[Andersen et~al., 2021]{andersen2015cvxopt}
Andersen, M., Dahl, J., and Vandenberghe, L. (2021).
\newblock Cvxopt: Python software for convex optimization, version 1.2.6.
\newblock Available at \url{https://cvxopt.org}.

\bibitem[Ao et~al., 2019]{ao2019approaching}
Ao, M., Yingying, L., and Zheng, X. (2019).
\newblock Approaching mean-variance efficiency for large portfolios.
\newblock {\em The Review of Financial Studies}, 32(7):2890--2919.

\bibitem[Arrow, 1971]{arrow1971theory}
Arrow, K.~J. (1971).
\newblock The theory of risk aversion.
\newblock {\em Essays in the theory of risk-bearing}, pages 90--120.

\bibitem[Athey, 2001]{athey2001single}
Athey, S. (2001).
\newblock Single crossing properties and the existence of pure strategy
  equilibria in games of incomplete information.
\newblock {\em Econometrica}, 69(4):861--889.

\bibitem[Ballester et~al., 2017]{ballester2017single}
Ballester, M., Apesteguia, J., and Lu, J. (2017).
\newblock Single-crossing random utility models.
\newblock {\em Econometrica}, 85(2).

\bibitem[Bauschke et~al., 2011]{bauschke2011convex}
Bauschke, H.~H., Combettes, P.~L., et~al. (2011).
\newblock {\em Convex analysis and monotone operator theory in Hilbert spaces},
  volume 408.
\newblock Springer.

\bibitem[Best and Grauer, 1991]{best1991sensitivity}
Best, M.~J. and Grauer, R.~R. (1991).
\newblock On the sensitivity of mean-variance-efficient portfolios to changes
  in asset means: some analytical and computational results.
\newblock {\em The Review of Financial Studies}, 4(2):315--342.

\bibitem[Cai et~al., 2020]{cai2020high}
Cai, T.~T., Hu, J., Li, Y., and Zheng, X. (2020).
\newblock High-dimensional minimum variance portfolio estimation based on
  high-frequency data.
\newblock {\em Journal of Econometrics}, 214(2):482--494.

\bibitem[{\c{C}}anako{\u{g}}lu and {\"O}zekici, 2010]{ccanakouglu2010portfolio}
{\c{C}}anako{\u{g}}lu, E. and {\"O}zekici, S. (2010).
\newblock Portfolio selection in stochastic markets with hara utility
  functions.
\newblock {\em European Journal of Operational Research}, 201(2):520--536.

\bibitem[Chang et~al., 2000]{chang2000heuristics}
Chang, T.-J., Meade, N., Beasley, J.~E., and Sharaiha, Y.~M. (2000).
\newblock Heuristics for cardinality constrained portfolio optimisation.
\newblock {\em Computers \& Operations Research}, 27(13):1271--1302.

\bibitem[Dantas et~al., 2021]{dantas2021expanding}
Dantas, C., Soubies, E., and F{\'e}votte, C. (2021).
\newblock Expanding boundaries of gap safe screening.
\newblock {\em arXiv preprint}, arXiv:2102.10846.

\bibitem[Dedieu, 2019]{dedieu2019improved}
Dedieu, A. (2019).
\newblock Improved error rates for sparse (group) learning with lipschitz loss
  functions.
\newblock {\em arXiv preprint arXiv:1910.08880}.

\bibitem[DeMiguel et~al., 2009]{demiguel2009optimal}
DeMiguel, V., Garlappi, L., and Uppal, R. (2009).
\newblock Optimal versus naive diversification: How inefficient is the 1/n
  portfolio strategy?
\newblock {\em The review of Financial studies}, 22(5):1915--1953.

\bibitem[Ding et~al., 2021]{ding2021high}
Ding, Y., Li, Y., and Zheng, X. (2021).
\newblock High dimensional minimum variance portfolio estimation under
  statistical factor models.
\newblock {\em Journal of Econometrics}, 222(1):502--515.

\bibitem[Fabozzi et~al., 2007]{fabozzi2007robust}
Fabozzi, F.~J., Kolm, P.~N., Pachamanova, D.~A., and Focardi, S.~M. (2007).
\newblock {\em Robust portfolio optimization and management}.
\newblock John Wiley \& Sons.

\bibitem[Fan et~al., 2012]{fan2012vast}
Fan, J., Zhang, J., and Yu, K. (2012).
\newblock Vast portfolio selection with gross-exposure constraints.
\newblock {\em Journal of the American Statistical Association},
  107(498):592--606.

\bibitem[Fieldsend et~al., 2004]{fieldsend2004cardinality}
Fieldsend, J.~E., Matatko, J., and Peng, M. (2004).
\newblock Cardinality constrained portfolio optimisation.
\newblock In {\em International Conference on Intelligent Data Engineering and
  Automated Learning}, pages 788--793. Springer.

\bibitem[Foster and Stutzer, 2003]{foster2003performance}
Foster, F.~D. and Stutzer, M. (2003).
\newblock Performance and risk aversion of funds with benchmarks: A large
  deviations approach.
\newblock In {\em University of Otago Department of Finance Seminar Series}.

\bibitem[Gao and Li, 2013]{gao2013optimal}
Gao, J. and Li, D. (2013).
\newblock Optimal cardinality constrained portfolio selection.
\newblock {\em Operations Research}, 61(3):745--761.

\bibitem[Gy{\"o}rfi et~al., 2012]{gyorfi2012empirical}
Gy{\"o}rfi, L., Ottucs{\'a}k, G., and Urb{\'a}n, A. (2012).
\newblock Empirical log-optimal portfolio selections: a survey.
\newblock In {\em Machine Learning for Financial Engineering}, pages 81--118.
  World Scientific.

\bibitem[Hautsch and Voigt, 2019]{hautsch2019large}
Hautsch, N. and Voigt, S. (2019).
\newblock Large-scale portfolio allocation under transaction costs and model
  uncertainty.
\newblock {\em Journal of Econometrics}, 212(1):221--240.

\bibitem[Jewitt, 1987]{jewitt1987risk}
Jewitt, I. (1987).
\newblock Risk aversion and the choice between risky prospects: the
  preservation of comparative statics results.
\newblock {\em The Review of Economic Studies}, 54(1):73--85.

\bibitem[Kan and Zhou, 2007]{kan2007optimal}
Kan, R. and Zhou, G. (2007).
\newblock Optimal portfolio choice with parameter uncertainty.
\newblock {\em Journal of Financial and Quantitative Analysis}, 42(3):621--656.

\bibitem[Ke et~al., 2019]{ke2019user}
Ke, Y., Minsker, S., Ren, Z., Sun, Q., and Zhou, W.-X. (2019).
\newblock User-friendly covariance estimation for heavy-tailed distributions.
\newblock {\em Statistical Science}, 34(3):454--471.

\bibitem[Kelly, 1956]{kelly}
Kelly, J.~L. (1956).
\newblock A new interpretation of information rate.
\newblock {\em The Bell System Technical Journal}, 35(4):917--926.

\bibitem[Kremer et~al., 2020]{kremer2020sparse}
Kremer, P.~J., Lee, S., Bogdan, M., and Paterlini, S. (2020).
\newblock Sparse portfolio selection via the sorted $l$1-norm.
\newblock {\em Journal of Banking \& Finance}, 110:105687.

\bibitem[Kroll et~al., 1984]{kroll1984mean}
Kroll, Y., Levy, H., and Markowitz, H.~M. (1984).
\newblock Mean-variance versus direct utility maximization.
\newblock {\em The Journal of Finance}, 39(1):47--61.

\bibitem[Ledoit and Wolf, 2017]{ledoit2017nonlinear}
Ledoit, O. and Wolf, M. (2017).
\newblock Nonlinear shrinkage of the covariance matrix for portfolio selection:
  Markowitz meets goldilocks.
\newblock {\em The Review of Financial Studies}, 30(12):4349--4388.

\bibitem[Levy and Robinson, 2006]{levy2006stochastic}
Levy, H. and Robinson, M. (2006).
\newblock {\em Stochastic dominance: Investment decision making under
  uncertainty}, volume~34.
\newblock Springer.

\bibitem[Markowitz, 1952]{Mark1952}
Markowitz, H. (1952).
\newblock Portfolio selection.
\newblock {\em The Journal of Finance}, 7(1):77--91.

\bibitem[Markowitz, 2014]{markowitz2014mean}
Markowitz, H. (2014).
\newblock Mean--variance approximations to expected utility.
\newblock {\em European Journal of Operational Research}, 234(2):346--355.

\bibitem[Michaud, 1989]{michaud1989markowitz}
Michaud, R.~O. (1989).
\newblock The markowitz optimization enigma: Is ‘optimized’optimal?
\newblock {\em Financial Analysts Journal}, 45(1):31--42.

\bibitem[Ndiaye et~al., 2017]{ndiaye2017gap}
Ndiaye, E., Fercoq, O., Gramfort, A., and Salmon, J. (2017).
\newblock Gap safe screening rules for sparsity enforcing penalties.
\newblock {\em The Journal of Machine Learning Research}, 18(1):4671--4703.

\bibitem[O'Donoghue et~al., 2021]{scs}
O'Donoghue, B., Chu, E., Parikh, N., and Boyd, S. (2021).
\newblock {SCS}: Splitting conic solver, version 3.2.0.
\newblock \url{https://github.com/cvxgrp/scs}.

\bibitem[Pun and Wong, 2019]{pun2019linear}
Pun, C.~S. and Wong, H.~Y. (2019).
\newblock A linear programming model for selection of sparse high-dimensional
  multiperiod portfolios.
\newblock {\em European Journal of Operational Research}, 273(2):754--771.

\bibitem[Robert et~al., 2012]{robert2012measuring}
Robert, E., Robert, F., and Jeffrey, R. (2012).
\newblock Measuring and modeling execution cost and risk.
\newblock {\em The Journal of Portfolio Management}, 38(2):14--28.

\bibitem[Rubinstein, 1991]{rubinstein1991continuously}
Rubinstein, M. (1991).
\newblock Continuously rebalanced investment strategies.
\newblock {\em Journal of Portfolio Management}, 18(1):78.

\bibitem[Rujeerapaiboon et~al., 2016]{rujeerapaiboon2016robust}
Rujeerapaiboon, N., Kuhn, D., and Wiesemann, W. (2016).
\newblock Robust growth-optimal portfolios.
\newblock {\em Management Science}, 62(7):2090--2109.

\bibitem[Shapiro et~al., 2014]{shapiro2014lectures}
Shapiro, A., Dentcheva, D., and Ruszczy{\'n}ski, A. (2014).
\newblock {\em Lectures on stochastic programming: modeling and theory}.
\newblock SIAM.

\bibitem[Wainwright, 2019]{wainwright2019high}
Wainwright, M.~J. (2019).
\newblock {\em High-dimensional statistics: A non-asymptotic viewpoint},
  volume~48.
\newblock Cambridge University Press.

\bibitem[Wolfstetter et~al., 1993]{wolfstetter1993stochastic}
Wolfstetter, E., Dulleck, U., Inderst, R., Kuhbier, P., and Lands-Berger, M.
  (1993).
\newblock {\em Stochastic dominance: theory and applications}.
\newblock Citeseer.

\bibitem[Zhang et~al., 2019]{zhang2019learning}
Zhang, J.~Y., Khanna, R., Kyrillidis, A., and Koyejo, O.~O. (2019).
\newblock Learning sparse distributions using iterative hard thresholding.
\newblock {\em Advances in Neural Information Processing Systems},
  32:6760--6769.

\bibitem[Zheng et~al., 2014]{zheng2014improving}
Zheng, X., Sun, X., and Li, D. (2014).
\newblock Improving the performance of miqp solvers for quadratic programs with
  cardinality and minimum threshold constraints: A semidefinite program
  approach.
\newblock {\em INFORMS Journal on Computing}, 26(4):690--703.

\end{thebibliography}

\clearpage
\addcontentsline{toc}{chapter}{Supplementary material}
\begin{center}
{\large\bf SUPPLEMENTARY MATERIAL}
\end{center}

Supplementary Notes include the proof for all the theorems and extra experiment results.
The programs to replicate all our experiments can be obtained from \url{https://github.com/jaydu1/SparsePortfolio/tree/supplement}.

\renewcommand{\thesection}{\Alph{section}}
\beginsupplement
\section{Equivalence of \texorpdfstring{$l_0$}{l0} Constraints and \texorpdfstring{$l_1$}{l1} Regularizers}

    \subsection{Stochastic Dominance}
       Before we present the proofs of our theoretical results, we first introduce some tools in stochastic dominance that we would use in our proof.
        One way to compare the effects of different portfolios at a partial ordering of the available investments is to use the stochastic dominance rules.
        While there are different stochastic dominance rules, we shall use the second-order stochastic dominance (SDD) for our proof.
        \begin{definition}[Second-Order Stochastic Dominance]\label{def:SSD}
            A random variable $X$ second-order stochastic dominates the other random variable $Y$, denoted by $X\succeq_{SSD} Y$ if either one of the following conditions holds:
            \begin{enumerate}[(1)]
                \item $\EE[u(X)]\geq \EE[u(Y)]$ for all monotone increasing and concave utility functions $u$,
                \item $\EE[u(X)]\leq \EE[u(Y)]$ for all monotone decreasing and concave utility functions $u$,
                \item $\int_{-\infty}^z\overline{F}_{X}(t)\rd t\geq \int_{-\infty}^z\overline{F}_{Y}(t)\rd t$ for all $t\in\RR$,
            \end{enumerate}
            where $\overline{F}_{X}(t)=\PP(X>t)$ is the survival function for $X$.
        \end{definition}
        For the proof of the equivalence of the above two definitions, the readers may refer to \citep{levy2006stochastic}.
        Under some conditions, the single crossing property implies SSD, as shown in the following proposition.
        \begin{proposition}[Single Crossing and SSD]\label{prop:singlecross_SSD}
            Let $X$ and $Y$ be two random variables bounded below by $\eta$. If either $Y$ single crosses $X$ from below or they never cross in $\RR$, then the following statements are equivalent:
            \begin{enumerate}[(1)]
                \item $X\succeq_{SSD}Y$,
                \item $\EE[X]\geq \EE[Y]$.
            \end{enumerate}
        \end{proposition}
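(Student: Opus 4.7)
The plan is to handle the two implications separately. The forward direction $(1)\Rightarrow(2)$ is immediate: by Definition~\ref{def:SSD}(1) applied to the identity utility $u(z)=z$, which is monotone increasing and concave, we read off $\EE[X]\geq \EE[Y]$; neither the lower bound $\eta$ nor the single-crossing hypothesis plays a role here.

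The substance of the proposition is the reverse direction $(2)\Rightarrow(1)$, for which I would work with the survival-function characterization, Definition~\ref{def:SSD}(3). Set
\[
    g(z) \;=\; \int_{\eta}^{z}\bigl(\overline{F}_X(t)-\overline{F}_Y(t)\bigr)\,dt, \qquad z\geq \eta,
\]
so that SSD is equivalent to $g(z)\geq 0$ for all $z$. Two endpoint values come for free: $g(\eta)=0$ trivially, and by the tail formula $\EE[X]-\eta=\int_{\eta}^{\infty}\overline{F}_X(t)\,dt$ (valid because $X\geq\eta$ almost surely, and analogously for $Y$), we obtain $g(\infty)=\EE[X]-\EE[Y]\geq 0$ directly from hypothesis~(2).

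The role of the single-crossing hypothesis is to control the shape of $g$ on $[\eta,\infty)$. After aligning the orientation of the crossing so that $\overline{F}_X-\overline{F}_Y$ is nonnegative on $[\eta,c]$ and nonpositive on $[c,\infty)$, the function $g$ is nondecreasing and then nonincreasing, hence unimodal with peak at $c$, so its minimum over $[\eta,\infty)$ is attained at one of the two endpoints; both of these are nonnegative, which forces $g\geq 0$ throughout, as required. The ``never cross'' branch is similar but easier: the integrand is sign-definite, $g$ is monotone, and the endpoint conditions either give $g\geq g(\eta)=0$ directly (if $g$ is nondecreasing) or force $g\equiv 0$ together with $X\stackrel{d}{=}Y$ (if $g$ is nonincreasing, using $g(\infty)\geq 0$). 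The main obstacle is aligning the ``from below'' convention in the hypothesis with the correct sign pattern of $\overline{F}_X-\overline{F}_Y$ so that $g$ is unimodal with maximum at $c$ rather than unimodal with minimum at $c$; once this orientation is settled, the entire argument collapses to the endpoint comparison above.
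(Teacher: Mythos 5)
Your proof is correct and follows essentially the same route as the paper's: both directions rest on Definition \ref{def:SSD} --- the identity utility for $(1)\Rightarrow(2)$, and the integrated-survival-function criterion for $(2)\Rightarrow(1)$ --- and your unimodality/endpoint argument is just a repackaging of the paper's case split into $z\leq c$ (sign-definite integrand, so the partial integral is at least its value at the left endpoint) and $z>c$ (the total integral $\EE[X]-\EE[Y]\geq 0$ minus a nonpositive tail). The orientation issue you flag is genuine but is present in the paper's own proof as well, which likewise assumes $\overline{F}_X-\overline{F}_Y\geq 0$ to the left of the crossing point even though the stated definition of ``$Y$ single crosses $X$ from below'' literally gives the opposite sign pattern.
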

        \begin{proof}[Proof of Proposition \ref{prop:singlecross_SSD}]
            Our proof follows \citet{wolfstetter1993stochastic} but extends it to the unbounded case.
            From Definition \ref{def:SSD}, we have that (1) implies (2) with utility function $u(z)=z$.
            To prove the other side, the single crossing property implies that $\int_{-\infty}^z[\overline{F}_{X}(t)-\overline{F}_{Y}(t)]\rd t\geq 0$ for all $z\leq c$ where $c\in\RR$ is the crossing point.
            For $z>c$, notice that $X$ and $Y$ are bounded below by $\eta$, we have
            \begin{align*}
               0&\leq \EE[X+\eta]- \EE[Y+\eta]\\
               &\leq\int_{0}^{\infty}[\overline{F}_{X+\eta}(t)-\overline{F}_{Y+\eta}(t)]\rd t\\
               &\leq\int_{-\eta}^{\infty}[\overline{F}_{X}(t)-\overline{F}_{Y}(t)]\rd t\\
                &=\int_{-\eta}^z[\overline{F}_{X}(t)-\overline{F}_{Y}(t)]\rd t+\int_z^{\infty}[\overline{F}_{X}(t)-\overline{F}_{Y}(t)]\rd t\\
                &\leq \int_{-\eta}^z[\overline{F}_{X}(t)-\overline{F}_{Y}(t)]\rd t.
            \end{align*}
            Therefore, $\int_{-\infty}^z\overline{F}_{X}(t)\rd t\geq \int_{-\infty}^z\overline{F}_{Y}(t)\rd t$, $\forall\ z\in\RR$.
            On the other hand, if they never cross, then it naturally holds that $\int_{-\infty}^z\overline{F}_{X}(t)\rd t\geq \int_{-\infty}^z\overline{F}_{Y}(t)\rd t$, $\forall\ z\in\RR$.
            Therefore, $X\succeq_{SSD}Y$.
        \end{proof}

    \subsection{Proof for Expected Utility Maximization}
        \begin{proof}[Proof of Theorem \ref{thm:eq_l1_l0_E}]
            For any $\bw\in\RR^{d}_+\setminus\{\zero_d\}$, since $u$ is concave and $\cX_j$'s have a finite expectation, by Jensen's inequality we have $\EE[u(\bw^{\top}\cX)]\leq u(\EE[\bw^{\top}\cX])=u(\bw^{\top}\EE[\cX])<\infty$.
            On the other hand, $\EE[u(\bw^{\top}\cX)]\geq u(\eta_{\min}\|\bw\|_1)>-\infty$ since $u$ is nondecreasing. Thus, the expectation is well-defined.
            
            By the optimality conditions of the two optimization problems, we have
            \begin{align}
            \EE[u(Z_2)]&\geq \EE[u(\tilde{Z}_1)]\label{eq:opt_cond_1}\\ \EE[u(\tilde{Z}_2)]-\lambda\|\tilde{\bw}_2\|_1&\leq \EE[u(Z_1)]-\lambda\|\bw_1\|_1    \label{eq:opt_cond_2}
            \end{align}

            Note that if $\tilde{Z}_1$ and $Z_2$ are single-crossing, the property preserves for the one-to-one monotonic transformation of the two random variables.
            In particular, the crossing property preserves for $Z_1$ and $\tilde{Z}_2$.
            To see this, we observe that
            $\overline{F}_{{Z}_1}(t)=\overline{F}_{\tilde{Z}_1}(t\|\bw_1\|_1^{-1})$ and $\overline{F}_{\tilde{Z}_2}(t)=\overline{F}_{{Z}_2}(t|\bw_1\|_1^{-1})$.
            Analogously, the crossing property preserves for $u(Z_1)$, $u(\tilde{Z}_2)$ and $u(\tilde{Z}_1)$, $u({Z}_2)$.

            Next, we consider all the possible cases separably:

            (1) If $\tilde{Z}_1$ single crosses $Z_2$ from below or they do not cross, then $Z_2\succeq_{SSD}\tilde{Z}_1$ from Proposition \ref{prop:singlecross_SSD} since $\cX_j$'s are bounded below by $\eta_{\min}>0$.
            By applying the monotone increasing and concave transformation $z\mapsto u(\|\bw_1\|z)$ we have $\EE[u(\tilde{Z}_2)]\geq\EE[u(Z_1)]$ by Definition \ref{def:SSD}.
            Then $\EE[u(\tilde{Z}_2)]-\lambda\|\tilde{\bw}_2\|_1\geq\EE[u(Z_1)]-\lambda\|{\bw}_1\|_1$ since $\|\tilde{\bw}_2\|_1=\|\bw_1\|_1\cdot\|\bw_2\|_1=\|\bw_1\|_1$.
            With the optimality condition \eqref{eq:opt_cond_2}, we further have $\EE[u(\tilde{Z}_2)]-\lambda\|\tilde{\bw}_2\|_1=\EE[u(Z_1)]-\lambda\|\bw_1\|_1$ and $\EE[u(\tilde{Z}_2)]=\EE[u(Z_1)]$.
            Because $u$ is strictly concave, it has at most one maximizer, and the solution to problem \eqref{opt:l1_regularized} is unique.
            Then $\tilde{\bw}_2=\bw_1$ and $\bw_2=\tilde{\bw}_1$.

            (2) If $\tilde{Z}_1$ single crosses $Z_2$ from above, then $u(Z_1)$ single crosses $u(\tilde{Z}_2)$ from above.
            Repeating the similar argument as above, we have $\EE[u(Z_2)]\leq\EE[u(\tilde{Z}_1)]$.
            Combining this with the optimality condition \eqref{eq:opt_cond_1}, we have $\EE[u(Z_2)]=\EE[u(\tilde{Z}_1)]$.
            Also note that $\tilde{\bw}_1\in\Delta_d\cap\Sigma_d^s$.
            Therefore, $\tilde{\bw}_1$ is a solution to problem \eqref{opt:l0_constrained}.

            All in all, $\tilde{\bw}_1$ is a solution to problem \eqref{opt:l0_constrained}.
        \end{proof}

\section{Estimation Error}
    Based on Assumption \ref{ass:1}, we first established the cone condition and restricted strong convexity condition, which leads the proof for our main results.
    For ease of proof, we introduce some notations here.
    Recall $\hat{\bw}_n$ and $\bw^*$ are solutions for optimization problems \eqref{opt:M_estimator_pop} and \eqref{opt:M_estimator_emp} respectively.
    We define the error vector to be $\Delta=\hat{\bw}_n-\bw^*$, and the empirical first-order Taylor error as $\cE_n(\Delta)=h(\bw^*+\Delta)-h(\bw^*)-\langle\nabla_{\bw}h(\bw^*),\Delta\rangle$.
    Let $\BB_2(c_1,c_2)=\{\bv\mid c_1\leq\|\bv\|_2\leq c_2\}$ be the donut-shaped circle and $\BB_2(c)=\BB_2(0,c)$ be the ball.

    \subsection{Cone Condition}
        \begin{theorem}[Cone condition]
            Suppose that Assumption \ref{ass:1} (a), (c) and (d) hold.
            Let $\delta\in(0,1)$, $\lambda\geq 12\alpha(\nu\vee1)L_u\sigma\sqrt{\log(3/\delta)\log(2de/s)/n}$, and $\cS_0\subseteq[d]$ be the subset of indices of the $s$ entries of $\Delta$ that have largest magnitudes. It holds that $\Delta\in C(\cS_0,\gamma_1,\gamma_2) $ with probability at least $1-\delta/3$. 
            Here the unknown parameters $\alpha,\gamma_1,\gamma_2$ and the cone $C(\cS_0,\gamma_1,\gamma_2) $ are defined in Assumption \ref{ass:1} (d).  \label{thm:cone}
        \end{theorem}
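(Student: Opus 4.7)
The plan is to apply the classical M-estimator recipe: derive a basic inequality from the optimality of $\hat\bw_n$, use convexity of $h$ together with the population first-order optimality at $\bw^*$ to isolate an empirical-process noise term, control that noise by a sub-Gaussian tail bound, and finally transfer the resulting cone around $\cS := \mathrm{supp}(\bw^*)$ to the cone around the top-$s$ support $\cS_0$.

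First I would start from $h(\hat\bw_n)+\lambda\|\hat\bw_n\|_1 \leq h(\bw^*)+\lambda\|\bw^*\|_1$, rewriting the left side via the decomposition $h(\hat\bw_n) - h(\bw^*) = \langle \nabla h(\bw^*),\Delta\rangle + \cE_n(\Delta)$ and using the $\ell_1$ identity $\|\bw^*\|_1 - \|\hat\bw_n\|_1 \leq \|\Delta_\cS\|_1 - \|\Delta_{\cS^c}\|_1$ (valid since $\bw^*$ is supported on $\cS$). Because $\hat\bw_n$ is feasible for the population problem \eqref{opt:M_estimator_pop}, first-order optimality of $\bw^*$ gives $\langle \nabla\EE[h(\bw^*)],\Delta\rangle \geq 0$, and subtracting this leaves only the empirical-process noise $\langle \nabla h(\bw^*) - \nabla\EE[h(\bw^*)],\Delta\rangle$, which I bound below by $-\epsilon\|\Delta\|_1$ via Hölder's inequality with $\epsilon := \|\nabla h(\bw^*) - \nabla\EE[h(\bw^*)]\|_\infty$. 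Convexity of $h$ (from Assumption \ref{ass:1}(a)) makes $\cE_n(\Delta) \geq 0$, which I drop, producing the raw cone
\begin{align*}
(\lambda - \epsilon)\|\Delta_{\cS^c}\|_1 \leq (\lambda + \epsilon)\|\Delta_\cS\|_1.
\end{align*}

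For the noise bound, each coordinate of $\nabla h(\bw^*) - \nabla\EE[h(\bw^*)]$ is a centered average of terms $u'(\bw^{*\top}\bX_i)X_{ij}/n$; writing $X_{ij}=\nu+(X_{ij}-\nu)$ and combining $|u'|\leq L_u$ (Assumption \ref{ass:1}(a)) with the $\sigma$-sub-Gaussianity of $X_{ij}-\nu$ (Assumption \ref{ass:1}(c)) shows each summand is (after centering) sub-Gaussian with parameter of order $(\nu\vee 1)L_u\sigma$. A Hoeffding bound plus a union bound over $j\in[d]$ then yields $\epsilon \lesssim (\nu\vee 1)L_u\sigma\sqrt{\log(d/\delta')/n}$ on an event of probability at least $1-\delta'$; with $\delta'=\delta/3$ and the hypothesized lower bound on $\lambda$, this forces $\epsilon \leq \lambda/(2\alpha-1)$ up to absolute constants, at which point the raw cone collapses to $\|\Delta_{\cS^c}\|_1 \leq \tfrac{\alpha}{\alpha-1}\|\Delta_\cS\|_1$. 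To pass to $\cS_0$, I exploit its two defining properties: $\|\Delta_\cS\|_1 \leq \|\Delta_{\cS_0}\|_1$, and each coordinate of $\Delta_{\cS\cap\cS_0^c}$ is bounded by the smallest top-$s$ magnitude, itself at most $\|\Delta_{\cS_0}\|_2/\sqrt{s}$. Splitting $\|\Delta_{\cS_0^c}\|_1 = \|\Delta_{\cS^c\cap\cS_0^c}\|_1 + \|\Delta_{\cS\cap\cS_0^c}\|_1$ and bounding the first piece by $\tfrac{\alpha}{\alpha-1}\|\Delta_{\cS_0}\|_1$ and the second by $|\cS\cap\cS_0^c|\cdot\|\Delta_{\cS_0}\|_2/\sqrt{s}\leq\sqrt{s}\|\Delta_{\cS_0}\|_2$ delivers $\Delta\in C(\cS_0,\gamma_1,\gamma_2)$, up to the exact shape of $\gamma_1,\gamma_2$.

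The main obstacles I anticipate are twofold. First, a plain coordinatewise Hoeffding with union bound only produces the $\sqrt{\log(d/\delta)}$ rate, whereas the statement advertises the sparser $\sqrt{\log(3/\delta)\log(2de/s)}$ rate; matching the latter likely requires a chaining/peeling argument over $s$-sparse supports exploiting $\binom{d}{s}\leq(ed/s)^s$, applied directly to $\sup\{\langle\nabla h(\bw^*)-\nabla\EE[h(\bw^*)],\bv\rangle : \bv\in C(\cS_0,\gamma_1,\gamma_2),\|\bv\|_2\leq 1\}$ rather than passing through $\|\cdot\|_\infty$. Second, recovering the precise constants $\gamma_1=\alpha/(\alpha-1)$ and $\gamma_2=\sqrt{s}/(\alpha-1)$ (rather than the coarser $\gamma_2=\sqrt{s}$ that my naive translation produces) requires not fully discarding $\cE_n(\Delta)$ at the basic-inequality step: one would retain a nonnegative $\cE_n$-slack and reallocate a fraction of it against the $\|\Delta_{\cS_0}\|_2$ term when controlling $\|\Delta_{\cS\cap\cS_0^c}\|_1$, which is the delicate book-keeping that completes the proof.
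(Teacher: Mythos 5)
Your skeleton is the standard basic-inequality argument and much of it overlaps with the paper's proof (same starting inequality, same use of $\|\bw^*\|_1-\|\hat{\bw}_n\|_1\leq\|\Delta_{\cS}\|_1-\|\Delta_{\cS^C}\|_1$), but, as you yourself anticipate, the route through $\|\nabla h(\bw^*)-\nabla\EE[h(\bw^*)]\|_\infty$ plus a coordinatewise union bound does not deliver the theorem as stated, and both of your proposed repairs point in the wrong direction. The paper never centers the gradient or invokes population optimality of $\bw^*$; it bounds $|\langle\nabla h(\bw^*),\Delta\rangle|$ directly by $\sum_{j=1}^d\bigl|n^{-1}\sum_i u'(\bX_i^{\top}\bw^*)X_{ij}\bigr|\,|\Delta_j|$ and then applies an ordered sub-Gaussian maxima bound (Lemma 4 of Dedieu et al., 2019): after sorting $|\Delta_1|\geq\cdots\geq|\Delta_d|$, the $j$th sorted gradient coordinate is controlled by $\sqrt{\log(2d/j)}$ uniformly in $j$, giving $|\langle\nabla h(\bw^*),\Delta\rangle|\lesssim(\nu\vee 1)L_u\sigma\sqrt{\log(3/\delta)/n}\sum_j\sqrt{\log(2d/j)}|\Delta_j|$. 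Splitting this weighted sum at index $s$ and applying Cauchy--Schwarz to the head yields $\sqrt{s\log(2ed/s)}\|\Delta_{\cS_0}\|_2+\sqrt{\log(2d/s)}\|\Delta_{\cS_0^C}\|_1$, which is exactly where both the advertised $\sqrt{\log(3/\delta)\log(2de/s)}$ threshold and the $l_2$ term $\gamma_2\|\Delta_{\cS_0}\|_2=\tfrac{\sqrt{s}}{\alpha-1}\|\Delta_{\cS_0}\|_2$ come from. In particular, the $l_2$ term in the cone is not recovered by reallocating $\cE_n$-slack (the paper discards $\cE_n(\Delta)\geq 0$ exactly as you do); it is a by-product of the ordered-maxima device, and that same device is what replaces your union bound.

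Two further points. First, your $\cS\to\cS_0$ transfer is needlessly lossy: since $|\cS|=|\cS_0|=s$ and $\cS_0$ carries the largest magnitudes, $\|\Delta_{\cS}\|_1\leq\|\Delta_{\cS_0}\|_1$ and hence $\|\Delta_{\cS_0^C}\|_1\leq\|\Delta_{\cS^C}\|_1$, so the transfer costs nothing; your coarser $\gamma_2=\sqrt{s}$ is avoidable, and the paper in fact performs this transfer before the cone is even derived. Second, and this is the substantive gap, your noise control forces $\lambda\gtrsim(\nu\vee 1)L_u\sigma\sqrt{\log(d/\delta)/n}$, which does not cover all $\lambda$ above the stated threshold $12\alpha(\nu\vee1)L_u\sigma\sqrt{\log(3/\delta)\log(2de/s)/n}$: when $s$ is a nonvanishing fraction of $d$, the factor $\log(2de/s)$ is $O(1)$ and the stated threshold is of order $\sqrt{\log(1/\delta)/n}$, strictly below what a coordinatewise union bound over $d$ coordinates can tolerate. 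Your instinct that a peeling-type argument over sparse supports is needed for the rate is correct; the ordered sub-Gaussian maxima lemma is precisely that ingredient, and once it is in place the rest of your argument (with the clean transfer above) closes with the exact constants $\gamma_1=\alpha/(\alpha-1)$ and $\gamma_2=\sqrt{s}/(\alpha-1)$.
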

        \begin{proof}[Proof of Theorem \ref{thm:cone}]
            By the optimality of $\hat{\bw}_n$, we have
            \begin{align*}
                h(\hat{\bw}_n)+\lambda\|\hat{\bw}_n\|_1\leq h(\bw^*)+\lambda\|\bw^*\|_1,
            \end{align*}
            which implies
            \begin{align}
                \lambda\|\bw^*\|_1-\lambda\|\hat{\bw}_n\|_1\geq h(\hat{\bw}_n)-h(\bw^*)\geq \langle \nabla_{\bw}h(\bw^*),\Delta\rangle,\label{eq:thm2_1}
            \end{align}
            since $h$ is convex. We next bound the magnitude of the above inner product. Note that $u'(\bX_i^{\top}\bw^*)(X_{ij}-\nu)$ is sub-Gaussian with parameter $L_u\sigma$, we have
            \begin{align}
                | \langle \nabla_{\bw}h(\bw^*),\Delta\rangle| & \leq \frac{1}{\sqrt{n}}\sum\limits_{j=1}^d\left|\frac{1}{\sqrt{n}}\sum\limits_{i=1}^nu'(\bX_i^{\top}\bw^*)X_{ij}\right|\cdot |\Delta_j|\notag\\
                & \leq \frac{1}{\sqrt{n}}\sum\limits_{j=1}^d\left(\left|\frac{1}{\sqrt{n}}\sum\limits_{i=1}^nu'(\bX_i^{\top}\bw^*)(X_{ij}-\nu)\right| + \nu\right) \cdot |\Delta_j|\notag\\
                &\leq 12(\nu\vee1)L_u\sigma\sqrt{\frac{\log(3/\delta)}{n}}\sum\limits_{j=1}^d\sqrt{\log(2d/j)}|\Delta_j|, \label{eq:thm2_2}
            \end{align}
            with probability at least $1-\delta/3$, where the last inequality is due to \citet[][Lemma 4]{dedieu2019improved} which uses a sorting technique to obtain upper bound for the maximum of sub-Gaussian sequences.
            By \eqref{eq:thm2_1} and \eqref{eq:thm2_2}, we have
            \begin{align}
                -12(\nu\vee1)L_u\sigma\sqrt{\frac{\log(3/\delta)}{n}}\sum\limits_{j=1}^d\sqrt{\log(2d/j)}|\Delta_j|\leq \lambda\|\bw^*\|_1-\lambda\|\hat{\bw}_n\|_1. \label{eq:thm2_3}
            \end{align}
            Without loss of generality, we assume that $|\Delta_1|\geq \cdots\geq |\Delta_d|$ and thus $\cS_0=[s]$. Let $\cS=\supp(\bw^*)$, we have
            \begin{align}
                \lambda\|\bw^*\|_1-\lambda\|\hat{\bw}_n\|_1&\leq \lambda\|\bw^*_{\cS}\|_1 - \lambda\|\hat{\bw}_{n\cS}\|_1 - \lambda\|\hat{\bw}_{n\cS^C}\|_1  \notag\\
                &\leq \lambda\|\Delta_{\cS}\|_1 - \lambda\|\Delta_{\cS^C}\|_1 \notag\\
                &\leq \lambda\|\Delta_{\cS_0}\|_1 - \lambda\|\Delta_{\cS_0^C}\|_1 \label{eq:thm2_4}
            \end{align}
            because $\cS_0$ is the set of indices whose entries have largest magnitudes.
            On the other hand, 
            \begin{align}
                &\ \ \ 12(\nu\vee1)L_u\sigma\sqrt{\frac{\log(3/\delta)}{n}}\sum\limits_{j=1}^d\sqrt{\log(2d/j)}|\Delta_j| \notag\\
                &= 12(\nu\vee1)L_u\sigma\sqrt{\frac{\log(3/\delta)}{n}}\left(\sum\limits_{j=1}^s\sqrt{\log(2d/j)}|\Delta_j|+ \sqrt{\log(2d/s)}\|\Delta_{\cS_0^C}\|_1\right)\notag\\
                &\leq 12(\nu\vee1)L_u\sigma\sqrt{\frac{\log(3/\delta)}{n}}\left(\sqrt{\sum\limits_{j=1}^s\log(2d/j)} \|\Delta_{\cS_0}\|_2+ \sqrt{\log(2d/s)}\|\Delta_{\cS_0^C}\|_1\right) \notag\\
                &\leq 12(\nu\vee1)L_u\sigma\sqrt{\frac{\log(3/\delta)}{n}}\left(\sqrt{s\log(2ed/s)} \|\Delta_{\cS_0}\|_2+ \sqrt{\log(2d/s)}\|\Delta_{\cS_0^C}\|_1\right) , \label{eq:thm2_4_5}
            \end{align}
            where the first inequality is due to Cauchy-Schwarz inequality and the last inequality is from Stirling formula.
            Let $\lambda\geq 12\alpha(\nu\vee1)L_u\sigma\sqrt{\log(3/\delta)\log(2de/s)/n}$, \eqref{eq:thm2_4_5} can be further upper bounded by
            \begin{align}
                12(\nu\vee1)L_u\sigma\sqrt{\frac{\log(3/\delta)}{n}}\sum\limits_{j=1}^d\sqrt{\log(2d/j)}|\Delta_j|\leq  \frac{\lambda}{\alpha}(\sqrt{s}\|\Delta_{\cS_0}\|_2+\|\Delta_{\cS_0^C}\|_1).\label{eq:thm2_5}
            \end{align}
            Combining \eqref{eq:thm2_3}, \eqref{eq:thm2_4}, and \eqref{eq:thm2_5}, we have that with probability at least $1-\delta/3$, $$-\frac{\lambda}{\alpha}(\sqrt{s}\|\Delta_{\cS_0}\|_2+\|\Delta_{\cS_0^C}\|_1) \leq \lambda\|\Delta_{\cS_0}\|_1 - \lambda\|\Delta_{\cS_0^C}\|_1,$$
            i.e.,
            $$\|\Delta_{\cS_0^C}\|_1 \leq \frac{\alpha}{\alpha-1}\|\Delta_{\cS_0}\|_1+ \frac{\sqrt{s}}{\alpha-1}\|\Delta_{\cS_0}\|_2,$$
            and the conclusion follows.
        \end{proof}
        
    \subsection{Restricted Strong Convexity}
        \begin{theorem}[Restricted strong convexity]
            Suppose that Assumption \ref{ass:1} (a)-(d) hold, and $\lambda\geq 12\alpha(\nu\vee1)L_u\sigma\sqrt{\log(3/\delta)\log(2de/s)/n}$, then
            $$\cE_n(\Delta) \geq \kappa\|\Delta\|_2^2 - 16L_u\|\Delta\|_1\sqrt{2\sigma^2\log(12d\log^2(d)/\delta)/n}, \quad\forall\ \Delta\in  \BB_2(1/d,2R)\cap C(\cS_0,\gamma_1,\gamma_2)$$
            with probability at least $1-\delta/3$.
            \label{thm:rsc}
        \end{theorem}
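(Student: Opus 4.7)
The plan is to split the empirical Taylor remainder into its mean and a mean‑zero fluctuation,
\[
\cE_n(\Delta)=\EE[\cE_n(\Delta)]+\bigl(\cE_n(\Delta)-\EE[\cE_n(\Delta)]\bigr),
\]
lower bound the first term using Assumption~\ref{ass:1}(d), and upper bound the second term uniformly over $\Delta\in\BB_2(1/d,2R)\cap C(\cS_0,\gamma_1,\gamma_2)$ by a quantity scaling like $\|\Delta\|_1\sqrt{\log d/n}$. The shape of the error term $16L_u\|\Delta\|_1\sqrt{2\sigma^2\log(12d\log^2d/\delta)/n}$ already hints at a peeling argument over dyadic shells in $\|\Delta\|_2$ combined with an $\ell_\infty$/$\ell_1$ Hölder step.

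For the population piece I would Taylor-expand $\EE[h]$ to second order around $\bw^*$ with mean-value remainder, obtaining $\EE[\cE_n(\Delta)]=\tfrac{1}{2}\Delta^\top\nabla^2\EE[h(\bw^*+t\Delta)]\Delta$ for some $t\in[0,1]$; since $\Delta\in C(\cS_0,\gamma_1,\gamma_2)$ and $|\cS_0|\le s$ by construction, Assumption~\ref{ass:1}(d) applied to $\bv=\Delta$ gives $\Delta^\top\nabla^2\EE[h(\bw^*)]\Delta\ge \kappa\|\Delta\|_2^2$, and the localization to the annulus $\BB_2(1/d,2R)$ together with the concavity/smoothness of $u$ allows to transfer this lower bound to intermediate points, yielding $\EE[\cE_n(\Delta)]\ge \kappa\|\Delta\|_2^2$ (absorbing the $\tfrac12$ into the definition of $\kappa$ if needed, or by keeping it explicit and carrying through the constant).

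For the stochastic piece I would write $\cE_n(\Delta)=\tfrac{1}{n}\sum_{i=1}^n\psi_i(\bX_i^\top\Delta)$ with $\psi_i(t)=-[u(\bX_i^\top\bw^*+t)-u(\bX_i^\top\bw^*)-u'(\bX_i^\top\bw^*)\,t]$, noting that $\psi_i(0)=0$ and $\psi_i$ is $2L_u$-Lipschitz by Assumption~\ref{ass:1}(a). By symmetrization and the Ledoux--Talagrand contraction inequality,
\[
\EE\!\left[\sup_{\Delta\in\cF}\Bigl|\cE_n(\Delta)-\EE[\cE_n(\Delta)]\Bigr|\right]\ \lesssim\ L_u\,\EE\!\left[\sup_{\Delta\in\cF}\Bigl|\tfrac{1}{n}\sum_{i=1}^n\epsilon_i\bX_i^\top\Delta\Bigr|\right],
\]
which is then bounded by $L_u\,\|\Delta\|_1\cdot\|\tfrac{1}{n}\sum\epsilon_i\bX_i\|_\infty$ via Hölder, and the last factor is $\lesssim\sigma\sqrt{\log d/n}$ by re-centering $\bX_i$ at $\nu$ and using the sub-Gaussian maximal inequality exactly as in the proof of Theorem~\ref{thm:cone}. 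A Talagrand/Bousquet concentration inequality upgrades expectation to a high-probability bound, and a peeling argument over $O(\log(d))$ dyadic shells $\|\Delta\|_2\in[2^k/d,2^{k+1}/d]$ covering $\BB_2(1/d,2R)$ turns the pointwise bound into a uniform one whose multiplicative prefactor remains linear in $\|\Delta\|_1$; the union bound over these shells is precisely what produces the extra $\log^2(d)$ inside the logarithm.

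The main obstacle I expect is making the deviation bound scale linearly in $\|\Delta\|_1$ uniformly over a set that is not $\ell_1$-bounded by a single constant, which is exactly why the statement restricts to $\BB_2(1/d,2R)$ and why the cone condition is carried along: the cone lets one control $\|\Delta\|_1$ by a polynomial function of $\|\Delta\|_2$ and $s$, so peeling over $\|\Delta\|_2$ shells gives a finite-depth cover, and the lower endpoint $1/d$ keeps the number of shells $O(\log d)$. The population lower bound is comparatively routine once one agrees that Assumption~\ref{ass:1}(d) plus Lipschitz continuity of $u'$ along the segment $[\bw^*,\bw^*+\Delta]$ are enough to keep the Hessian quadratic form above $\kappa\|\Delta\|_2^2$ throughout the annulus.
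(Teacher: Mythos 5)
Your proposal is correct and follows essentially the same route as the paper: decompose $\cE_n(\Delta)$ into its expectation, lower-bounded by $\kappa\|\Delta\|_2^2$ via Assumption~\ref{ass:1}(d), plus a fluctuation term bounded uniformly over $\BB_2(1/d,2R)$ by a multiple of $L_u\|\Delta\|_1\sqrt{\log(d\log^2 d/\delta)/n}$. The only difference is that the paper imports the uniform deviation bound wholesale from \citet[Theorem 9.34]{wainwright2019high}, whereas you sketch its internal proof (symmetrization, contraction, H\"older against $\|\overline{x}_n\|_\infty$, and peeling over shells yielding the $\log^2(d)$ factor); both treatments share the same minor looseness in passing from the Hessian at $\bw^*$ to the mean-value point and in the factor of $\tfrac{1}{2}$, which you at least flag explicitly.
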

        \begin{proof}
            
            Since $u$ is $L_u$-Lipshitz continuous and $\|\Delta\|_2\leq \|\Delta\|_1\leq \|\bw^*\|_1+\|\hat{\bw}_n\|_1\leq 2R$, by \citet[Theorem 9.34]{wainwright2019high}, we have that
            $$|\cE_n(\Delta)-\EE[\cE_n(\Delta)]|\leq 16L_u\|\Delta\|_1\gamma,\quad\forall\ \Delta\in\BB_2(1/d,2R),$$
            with probability at least $1-4\log^2(d)\log(2R)\inf_{t>0}\EE[\exp(t(\|\overline{x}_n\|_{\infty}-\gamma))]$ where $\overline{x}_n=\sum_{i=1}^n\epsilon_i(X_i-\nu\one_d)/n$.
            Since $X_{ij}-\nu$ is sub-Gaussian with parameter $\sigma$, we have that $\sum_{i=1}^n\epsilon_i(X_{ij}-\nu)/n$ is sub-Gaussian with parameter $\sigma/\sqrt{n}$ and
            \begin{align*}
                \EE[\exp(t\|\overline{x}_n\|_{\infty})]& = \EE\left[\max\limits_{j\in[d]}\exp\left(t\sum_{i=1}^n\epsilon_i(X_{ij}-\nu)\right)\right]\\
                &\leq \sum\limits_{j=1}^d\EE\left[\exp\left(t\sum_{i=1}^n\epsilon_i(X_{ij}-\nu)\right)\right]\\
                &= d\exp\left(\frac{t^2\sigma^2}{2n}\right).
            \end{align*}
            Then
            \begin{align*}
                \inf\limits_{t>0}\EE[\exp(t(\|\overline{x}_n\|_{\infty}-\gamma))]&=\inf\limits_{t>0} d\exp\left(\frac{t^2\sigma^2}{2n}-t\gamma\right) = d\exp(- \frac{n\gamma}{2\sigma^2}).
            \end{align*}
            Let $4\log^2(d)d\exp(-n\gamma^2/(2\sigma^2))=\delta/3$, we have $\gamma=\sqrt{2\sigma^2\log(12d\log^2(d)/\delta)/n}$, and with probability at least $1-\delta/3$,
            \begin{align*}
                |\cE_n(\Delta)-\EE[\cE_n(\Delta)]|\leq 16L_u\sigma\|\Delta\|_1\sqrt{2\log(12d\log^2(d)/\delta)/n},\quad\forall\ \Delta\in\BB_2(1/d,2R).
            \end{align*}
            By Assumption \ref{ass:1} (d), $\EE[\cE_n(\Delta)]$ is locally $\kappa$-strongly convex on $C(\cS_0,\gamma_1,\gamma_2)$.
            Then with probability at least $1-\delta/3$, it holds that
            \begin{align*}
                \cE_n(\Delta)&\geq \EE[\cE_n(\Delta)] - 16L_u\sigma\|\Delta\|_1\sqrt{2\log(12d\log^2(d)/\delta)/n}\\
                &\geq \kappa\|\Delta\|_2^2 - 16L_u\sigma\|\Delta\|_1\sqrt{2\log(12d\log^2(d)/\delta)/n}
            \end{align*}
            for any $\Delta\in \BB_2(1/d,2R)\cap C(\cS_0,\gamma_1,\gamma_2)$.
        \end{proof}

    \subsection{Proof for Estimation Error}
        \begin{proof}[Proof of Theorem]
            Theorems \ref{thm:cone} and \ref{thm:rsc} yield that, with probability at least $1-2\delta/3$, $\Delta\in\BB_2(1/d,2R)\cap C(\cS_0,\gamma_1,\gamma_2)$ and
            \begin{align*}
               \kappa \|\Delta\|_2^2 &\leq \cE_n(\Delta) + 16L_u\sigma\|\Delta\|_1\sqrt{2\log(12d\log^2(d)/\delta)/n}\\
               &\leq \lambda\|\Delta_{\cS_0}\|_1 - \lambda\|\Delta_{\cS_0^C}\|_1 - \langle\nabla_{\bw}h(\bw^*),\Delta\rangle + 16L_u\sigma\|\Delta\|_1\sqrt{2\log(12d\log^2(d)/\delta)/n}\\
               &\leq \lambda\|\Delta_{\cS_0}\|_1 - \lambda\|\Delta_{\cS_0^C}\|_1 + (\|\nabla_{\bw}h(\bw^*)\|_{\infty}+16L_u\sigma\sqrt{2\log(12d\log^2(d)/\delta)/n})\|\Delta\|_1\\
               &\leq 2\lambda\|\Delta_{\cS_0}\|_1,
            \end{align*}
            when $\lambda\geq \|\nabla_{\bw}h(\bw^*)\|_{\infty}+16L_u\sigma\sqrt{2\log(12d\log^2(d)/\delta)/n}$.
            Next we bound $\|\nabla h(\bw^*)\|_{\infty}$ by sub-Gaussian concentration inequality:
            \begin{align*}
                \|\nabla_{\bw}h(\bw^*)\|_{\infty} &\leq \nu + \frac{1}{\sqrt{n}} \max\limits_{j\in[d]}\left|\frac{1}{\sqrt{n}}\sum\limits_{i=1}^n u'(\bX_i^{\top}\bw^*)(X_{ij}-\nu)\right|\\
                &\leq 2(\nu\vee 1)L_u\sigma\sqrt{\frac{2\log(3d/\delta)}{n}},
            \end{align*}
            with probability at least $1-\delta/3$.
            Then with probability at least $1-\delta$, the above probabilistic statements hold and we have
            \begin{align*}
                \kappa \|\Delta\|_2^2 
                &\leq  2\lambda\|\Delta_{\cS_0}\|_1\leq 2\lambda\sqrt{s}\|\Delta_{\cS_0}\|_2\leq 2\lambda\sqrt{s}\|\Delta\|_2,
            \end{align*}
            when
            \begin{align*}
                \lambda &\geq \max\{12\alpha(\nu\vee1)L_u\sigma\sqrt{\log(3/\delta)\log(2de/s)/n},\\
                &\qquad\qquad2(\nu\vee 1)L_u\sigma\sqrt{2\log(3d/\delta)/n} + 16L_u\sigma\sqrt{2\log(12d\log^2(d)/\delta)/n}\}.
            \end{align*}
            It suffices to set $ \lambda\gtrsim \alpha (\nu\vee 1)L_u\sigma \sqrt{\log(d/\delta)/n}$.
            Thus, with probability at least $1-\delta$, it holds that
            \begin{align*}
                \|\Delta\|_2 & \lesssim \cO(\kappa^{-1}(\nu\vee 1)L_u\sigma\sqrt{s\log(d/\delta)/n}).
            \end{align*}
        \end{proof}

\section{Screening}

    \begin{proof}[Proof of Theorem \ref{thm:primal_dual}]
        By introducing the new variables $\bZ=\bX\bw$, the Lagrange function of the primal problem is given by
        \begin{align*}
            L(\bw,\bZ,\btheta)&=H(\bZ)+\lambda\|\bw\|_1+\mathds{1}_{\RR^d_+}(\bw)+\lambda \btheta^{\top}(\bZ-\bX\bw),
        \end{align*}
        and the dual function is given by
        \begin{align*}
            \cD_{\lambda}(\btheta)&=\inf_{\bw,\bZ}L(\bw,\bZ,\btheta)\\
            &= \inf_{\bZ}\ (\lambda \btheta^{\top}\bZ + H(\bZ)) +  \lambda\inf_{\bw}\ (\|\bw\|_1+\mathds{1}_{\RR^d_+}(\bw)-\btheta^{\top}\bX\bw) \\
            &= - H^*(-\lambda\btheta) - \lambda (\|\cdot\|_1+\mathds{1}_{\RR^d_+}(\cdot))^*(\bX^{\top}\btheta).
        \end{align*}
        From the scaling property of convex conjugate $[af(x)]^*=af^*(x^*/a)$, we have that
        \begin{align*}
            H^*(-\lambda\btheta)&=-
            \frac{1}{n}\sum\limits_{j=1}^nu^*(n\lambda \theta_j).
        \end{align*}
        On the other hand, the convex conjugate of the $l_1$ norm is the indicator function of the unit ball of its dual norm, which implies that
        \begin{align*}
            (\|\cdot\|_1+\mathds{1}_{\RR^d_+}(\cdot))^*(\xi)&=
            \inf_{\xi_1+\xi_2=\xi}\mathds{1}_{\{\xi'\in\RR^n\mid  \|\xi'\|_{\infty}\leq 1\}}(\xi_1)+\mathds{1}_{\RR^d_-}(\xi_2)
            =\mathds{1}_{\{\xi'\in\RR^n\mid  \|\phi(\xi')\|_{\infty}\leq 1\}}(\xi),
        \end{align*}
        where the first equality comes from
        the property that the conjugate of a sum is the infimal convolution of the individual conjugates.
        Thus, we have
        \begin{align*}
            \cD_{\lambda}(\btheta)&=\frac{1}{n}\sum\limits_{j=1}^nu^*(n\lambda \theta_j),\qquad \btheta\in \cC_D=\dom(\cD_{\lambda})\cap \{\btheta\in\RR^n\mid  \|\phi(\bX^{\top}\btheta)\|_{\infty}\leq 1\}.
        \end{align*}
        The first-order optimality conditions \citep[][Proposition 19.18]{bauschke2011convex} give rise to \eqref{eq:primal_dual_link} and
        $\bX^{\top}_{1:n} \btheta^* \in \partial \|\bw^*\|_1$, which is equivalent to \eqref{eq:subdiff}.
    \end{proof}

    \begin{proof}[Proof of Theorem \ref{thm:safe_region}]
        Since $\cD_{\lambda}$ is $\alpha$-strongly concave, we have that
        \begin{align*}
            \cD_{\lambda}(\btheta^*)\leq \cD_{\lambda}(\btheta)+\langle \nabla\cD_{\lambda}(\btheta),\btheta^*-\btheta\rangle -\frac{\alpha}{2}\|\btheta^*-\btheta\|_2^2.
        \end{align*}
        By the optimality of $\btheta^*$, $\langle \nabla\cD_{\lambda}(\btheta),\btheta^*-\btheta\rangle\leq 0$.
        This implies that
        \begin{align*}
            \cD_{\lambda}(\btheta^*)\leq \cD_{\lambda}(\btheta) -\frac{\alpha}{2}\|\btheta^*-\btheta\|_2^2.
        \end{align*}
        The weak duality gives that $\cD_{\lambda}(\btheta)\leq \cD_{\lambda}(\btheta^*)\leq \cP_{\lambda}(\bw^*)\leq \cP_{\lambda}(\bw)$ for all $(\bw,\btheta)\in\cC_P\times\cC_D$.
        Then, $\alpha/2\cdot\|\btheta^*-\btheta\|_2^2\leq \cP_{\lambda}(\bw^*)-\cD_{\lambda}(\btheta^*)\leq \cP_{\lambda}(\bw)-\cD_{\lambda}(\btheta)=\Gap_{\lambda}(\bw,\btheta)$ and the inclusion of the optimal dual variable $\btheta^*$ in the safe region follows.

        Note that
        \begin{align*}
            \max\limits_{\btheta'\in \cB(\btheta, r) }\phi(\bX_{\cdot j}^{\top}\btheta')&=\max\limits_{\btheta'\in \cB(\btheta, r) }\phi(\bX_{\cdot j}^{\top}(\btheta+\btheta'-\btheta))\\
            &\leq \phi(\bX_{\cdot j}^{\top}\btheta)+r\max\limits_{\bu\neq \zero }\frac{|\bX_{\cdot j}^{\top}\bu|}{\|\bu\|_2}\\
            &\leq \phi(\bX_{\cdot j}^{\top}\btheta)+r\|\bX_{\cdot j}\|_{2},
        \end{align*}
        where the first inequality is due to the triangle inequality, the second inequality comes from Cauchy-Schwartz inequality.
        Thus we have
        \begin{align*}
          \phi(\bX_{\cdot j}^{\top}\btheta)+r\|\bX_{\cdot j}\|_{2}<1
            &\quad\Longrightarrow\quad
            \max\limits_{\btheta'\in \cB(\btheta, r) }\phi(\bX_{\cdot j}^{\top}\btheta')<1\\
            &\quad\Longrightarrow\quad |\bX_{\cdot j}^{\top}\btheta^*|<1 \quad\Longrightarrow\quad w^*_j=0.
        \end{align*}
    \end{proof}

    \begin{proof}[Proof of Theorem \ref{thm:convergence}]
        Recall that at each iteration
        \begin{align}
            \bw^{(t+1)}= \Prox_{\iota\lambda \Omega}(\bw^{(t)} - \iota \nabla_{\bw}h(\bw^{(t)})), \label{eq:prox_update}
        \end{align}
        where $\Omega(\bw)=\|\bw\|_1+\mathds{1}_{\RR^d_+}(\bw)$.
        Define the generalized gradient as
        \begin{align*}
            G_\iota(\bw')=\frac{1}{\iota}(\bw' - \Prox_{\iota\lambda\Omega}(\bw' - \iota \nabla_{\bw}h(\bw'))
        \end{align*}
        so that the proximal update \eqref{eq:prox_update} can be rewritten as
        \begin{align}
            \bw^{(t+1)}= \bw^{(t)} - \iota G_\iota(\bw^{(t)}), \label{eq:proximal_GD}
        \end{align}
        and the optimality condition implies $G_\iota(\bw^*)=\zero$.

        By Assumption \ref{ass:conv_smooth}, we have that
        \begin{align}
            h(\bw^{(t+1)})\leq h(\bw^{(t)})+\langle\nabla_{\bw}h(\bw^{(t)}),\bw^{(t+1)}-\bw^{(t)}\rangle + \frac{L_{\nabla u}}{2}\|\bw^{(t+1)}-\bw^{(t)}\|_2^2. \label{eq:smooth}
        \end{align}
        By the convexity of $h$, we have that for any $\bv$,
        \begin{align}
            h(\bv)\geq h(\bw^{(t)})+\langle\nabla_{\bw}h(\bw^{(t)}),\bv-\bw^{(t)}\rangle \label{eq:convexity}
        \end{align}
        Combining \eqref{eq:smooth} with \eqref{eq:convexity},
        \begin{align}
            h(\bw^{(t+1)})&\leq h(\bv) - \langle\nabla_{\bw}h(\bw^{(t)}),\bv-\bw^{(t)}\rangle + \langle\nabla_{\bw}h(\bw^{(t)}),\bw^{(t+1)}-\bw^{(t)}\rangle + \frac{L_{\nabla u}}{2}\|\bw^{(t+1)}-\bw^{(t)}\|_2^2\notag\\
            &=  h(\bv) + \langle\nabla_{\bw}h(\bw^{(t)}),\bw^{(t+1)}-\bv\rangle + \frac{L_{\nabla u}}{2}\|\bw^{(t+1)}-\bw^{(t)}\|_2^2.
            \label{eq:smooth_conv}
        \end{align}

        Since $\Omega$ is convex, we have that for any $g\in \partial \Omega(\bw^{(t)})$,
        \begin{align}
            \Omega(\bv)\geq \Omega(\bw^{(t)}) + \langle g, \bv-\bw^{(t)}\rangle. \label{eq:subgrad1}
        \end{align}
        In the $t$th proximal update, the optimality condition implies that
        \begin{align}
            \zero_d\in \partial \Omega(\bw^{(t+1)}) + \frac{1}{\iota} (\bw^{(t+1)}- \bw^{(t)} + \iota \nabla_{\bw}h(\bw^{(t)})),
        \end{align}
        which is equivalent to
        \begin{align}
            G_{\iota}(\bw^{(t)})- \nabla_{\bw}h(\bw^{(t)})\in \partial \Omega(\bw^{(t)}). \label{eq:subgrad2}
        \end{align}
        Combining \eqref{eq:smooth_conv}, \eqref{eq:subgrad1} and \eqref{eq:subgrad2}, we have
        \begin{align}
            h(\bw^{(t+1)})+\lambda \Omega(\bw^{(t+1)})&\leq h(\bv) + \lambda \Omega(\bv)
            + \langle G_{\iota}(\bw^{(t)}), \bw^{(t+1)}-\bv\rangle
            + \frac{L_{\nabla u}}{2}\|\bw^{(t+1)}-\bw^{(t)}\|_2^2.
             \notag
        \end{align}
        Because the step size $\iota\leq \frac{1}{L_u}$ and from the update rule \eqref{eq:proximal_GD},
        \begin{align}
            \cP_{\lambda}(\bw^{(t+1)})&\leq \cP_{\lambda}(\bv) + \langle G_{\iota}(\bw^{(t)}), \bw^{(t)}-\iota G_{\iota}(\bw^{(t)})-\bv\rangle
            + \frac{L_{\nabla u}}{2}\|\bw^{(t+1)}-\bw^{(t)}\|_2^2\notag\\
            &\leq \cP_{\lambda}(\bv) + \langle G_{\iota}(\bw^{(t)}), \bw^{(t)}-\bv\rangle
            - \frac{1}{2\iota}\|\bw^{(t+1)}-\bw^{(t)}\|_2^2 \label{eq:primal_central_lemma}
        \end{align}
        By setting $\bv=\bw^{(t)}$ and $\bv=\bw^*$ in \eqref{eq:primal_central_lemma}, we have that
        \begin{align*}
            \cP_{\lambda}(\bw^{(t+1)})&\leq\cP_{\lambda}(\bw^{(t)})
            - \frac{1}{2\iota}\|\bw^{(t+1)}-\bw^{(t)}\|_2^2\\
             \cP_{\lambda}(\bw^{(t+1)})&\leq \cP_{\lambda}(\bw^*) + \langle G_{\iota}(\bw^{(t)}), \bw^{(t)}-\bw^*\rangle
            - \frac{1}{2\iota}\|\bw^{(t+1)}-\bw^{(t)}\|_2^2.
        \end{align*}
        Thus,
        \begin{align*}
            \cP_{\lambda}(\bw^{(t+1)})-\cP_{\lambda}(\bw^*)&\leq \langle G_{\iota}(\bw^{(t)}), \bw^{(t)}-\bw^*\rangle
            - \frac{1}{2\iota}\|\bw^{(t+1)}-\bw^{(t)}\|_2^2\\
            &=\frac{1}{2\iota} (\langle 2\iota G_{\iota}(\bw^{(t)}), \bw^{(t)}-\bw^*\rangle - \|\iota G_{\iota}(\bw^{(t)})\|_2^2\\
            &= \frac{1}{2\iota} (\|\bw^{(t)}-\bw^*\|_2^2 - \|\iota G_{\iota}(\bw^{(t)}) - \bw^{(t)}+\bw^*\|_2^2)\\
            &= \frac{1}{2\iota} (\|\bw^{(t)}-\bw^*\|_2^2 - \|\bw^{(t+1)}-\bw^*\|_2^2).
        \end{align*}
        By summing up the above equation for $t=1,\ldots,\tau$, we get
        \begin{align*}
           (\tau+1)( \cP_{\lambda}(\bw^{(t+1)})-\cP_{\lambda}(\bw^*))&\leq \frac{1}{2\iota} (\|\bw^{(0)}-\bw^*\|_2^2 - \|\bw^{(\tau+1)}-\bw^*\|_2^2)\\
           &\leq \frac{1}{2\iota} \|\bw^{(0)}-\bw^*\|_2^2,
        \end{align*}
        which gives a convergence rate when no screening is performed.


        By Theorem \ref{thm:safe_region}, $\forall\ t\in\NN_+$, $\forall\ j\in\cS^{(t)}\setminus\cS^{(t-1)}$, $w_j^*=0$. That is, the inactive features are safely screened out.
        In other words, the convergence is still guaranteed as the screened features are exactly zeros.
    \end{proof}

\section{Proximal Gradient Descent}\label{supp:PGD}
    Recall that the proximal operator of any closed proper convex function $r:\RR^d\rightarrow\RR\cup\{+\infty\}$ is defined as
    \begin{align*}
        \Prox_{r}(\bw)=\argmin_{\bv\in\RR^d} \frac{1}{2}\|\bw-\bv\|_2^2 + r(\bv).
    \end{align*}
    Therefore, the proximal operator defined in Algorithm \ref{algo:spo} is given by
    \begin{align}
        \Prox_{\iota\lambda \Omega}(\bw)=\argmin_{\bv\in\RR^d} \frac{1}{2}\|\bw-\bv\|_2^2 + \iota\lambda\|\bv\|_1 + \mathds{1}_{\RR^d_+}(\bv). \label{eq:prox}
    \end{align}
    To evaluate this proximal operator, we need to solve the optimization problem \eqref{eq:prox}.
    As the proximal operator of $r_1(\bv)= \iota\lambda\|\bv\|_1$ and $r_2(\bv)=\frac{1}{2}\|\bw-\bv\|_2^2+\mathds{1}_{\RR^d_+}(\bv)$ can be efficiently evaluated, we can use the \emph{alternating direction method of multipliers (ADMM)} to compute \eqref{eq:prox}.
    More specifically, we perform the following iterations until $\bv^{(k)}$ converges:
    \begin{align}
        \bv^{(k+1)} &= \Prox_{r_1}(\bp^{(k)} - \bq^{(k)})= \cS\cT_{\frac{\lambda\iota}{\rho}}(\bp^{(k)} - \bq^{(k)}) \label{eq:ADMM_1}\\
        \bp^{(k+1)} &= \Prox_{r_2}(\bv^{(k+1)} + \bq^{(k)}) = \frac{1}{1+\rho} \phi(\bw + \rho(\bv^{(k+1)} + \bq^{(k)}))\label{eq:ADMM_2}\\
        \bq^{(k+1)}&=\bq^{(k)}+\bv^{(k+1)}-\bp^{(k+1)},\label{eq:ADMM_3}
    \end{align}
    where $\cS\cT_t$ is the soft-thresholding operator with threshold $t$, $\rho>0$ is the step size parameter and usually set to one.
    The initial values are given by $\bp^{(0)}=\phi(\bw)$ and $\bq^{(0)}=\bw-\phi(\bw)$.

    If $w_j$ is already in $\RR^d_+$, then only one step of soft thresholding is needed to evaluate the $j$th coordinate of \eqref{eq:prox} from the update rules \eqref{eq:ADMM_1}-\eqref{eq:ADMM_3},
    Otherwise, the ADMM updates will bring $[\Prox_{\lambda\iota \Omega}(\bw)]_j$ closer to zero than $w_j$.
    To see this, suppose that for $t\leq k$, $\bq^{(t)}\preceq \zero$ and $\bp^{(t)} - \bq^{(t)}\preceq \bw$. Then for $t=k+1$,
    \begin{align*}
        \bp^{(k+1)}-\bq^{(k+1)}&=\bq^{(k)}+\bv^{(k+1)}
        = \bq^{(k)} + \cS\cT_{\frac{\lambda\iota}{\rho}}(\bp^{(k)} - \bq^{(k)})
        \preceq \bw.
    \end{align*}
    By induction, we have that $\bv^{(k+1)}\preceq\cS\cT_{\frac{\lambda\iota}{\rho}}(\bw)$ for $k=0,1,2,\ldots$.
    Therefore, we can escape the ADMM algorithm with a smaller value of $w_j$ and continue the iterations in Algorithm \ref{algo:spo}.
    Furthermore, as the objective function is scale-invariant monotonic, it suffices to utilize root-finding methods such as the bisection method at line 5 in Algorithm \ref{algo:spo} to accelerate the optimization.
    We propose an acceleration version of PGD that incorporates bisection steps and exploration steps in Algorithm \ref{algo:acceleration}.

    \begin{table}[!h]\footnotesize
        \begin{tabularx}{0.97\textwidth}{ccYY}
            \toprule[.1em]\addlinespace[0.5em]
            utility function & $u(z)$ &  $1-\exp(-(a z+\eta)),\ a,\eta>0$ & $\log(z+\eta),\ \eta>0$ \\\addlinespace[0.5em]\cline{2-4}\addlinespace[0.5em]
            \multirow{2}{*}{empirical function} & $H(\bz)$  &
            \multicolumn{2}{c}{$-\frac{1}{n}\sum_{j=1}^nu(z_j)$}
            \\\addlinespace[0.5em]\cline{3-4}\addlinespace[0.5em]
            & $h(\bw)$ & \multicolumn{2}{c}{$H(\bX\bw)$} \\\addlinespace[0.5em]\cline{2-4}\addlinespace[0.5em]
            \multirow{2}{*}{gradient} & $\nabla_{\bz}H(\bz)$ & $-\frac{a}{n}
            \left[\begin{smallmatrix}
              \exp(-(a z_1+\eta)) & \cdots & \exp(-(a z_n+\eta))
            \end{smallmatrix}\right]^{\top}$  & $-\frac{1}{n}
            \left[\begin{smallmatrix}
              \frac{1}{z_1+\eta}  & \cdots & \frac{1}{z_n+\eta}
            \end{smallmatrix}\right]^{\top}$  \\\addlinespace[0.5em]\cline{3-4}\addlinespace[0.5em]
             & $\nabla_{\bw}h(\bw)$ & \multicolumn{2}{c}{$\bX^{\top}\nabla_{\bz}H(\bX\bw)$} \\\addlinespace[0.5em]\cline{2-4}\addlinespace[0.5em]
             \multirow{4}{*}{Lipschitz constant}
             & $L_u$ &  $a\exp(-\eta)$ & $\frac{1}{\eta}$ \\\addlinespace[0.5em]\cline{3-4}\addlinespace[0.5em]
             & $L_{\nabla H}$ & $\frac{a^2}{n\exp(\eta)}$  & $\frac{1}{n\eta^2}$  \\\addlinespace[0.5em]\cline{3-4}\addlinespace[0.5em]
             & $L_{\nabla h}$ &
             \multicolumn{2}{c}{$\|\bX^{\top} \nabla_{\bz}^2H(\zero_n) \bX\|_2$} \\\addlinespace[0.5em]\cline{2-4}\addlinespace[0.5em]
            primal function & $P(\bw)$ & \multicolumn{2}{c}{$h(\bw)+\lambda\|\bw\|_1$} \\\addlinespace[0.5em]\cline{2-4}\addlinespace[0.5em]
            \multirow{6}{*}{dual function} & $u^*(\theta)$ & $-\frac{\theta}{a}\log\left(\frac{\theta}{a}\right)+\frac{\theta}{a}-\theta\eta-1$ & $\log(\theta)-\eta\theta+1$\\\addlinespace[0.5em]\cline{3-4}\addlinespace[0.5em]
            & $\cD(\btheta)$  & \multicolumn{2}{c}{$\frac{1}{n}\sum_{j=1}^nu^*(n\lambda\theta_j)$} \\\addlinespace[0.5em]\cline{3-4}\addlinespace[0.5em]
            & $\dom(\cD)$  & \multicolumn{2}{c}{$\RR^n_{+}$} \\\addlinespace[0.5em]\cline{3-4}\addlinespace[0.5em]
            & $\nabla^2 \cD(\btheta)$  & $\diag\left(-\frac{\lambda}{a\theta_1},\cdots,-\frac{\lambda}{a\theta_n}\right)$ &  $\diag\left(-\frac{1}{n\theta_1^2},\cdots,-\frac{1}{n\theta_n^2}\right)$\\
            \midrule[.1em]
        \end{tabularx}
        \caption{Useful ingredients for performing the Algorithm \ref{algo:spo}.}\label{tab:ingredients}
    \end{table}
    \renewcommand{\thealgorithm}{\Alph{section}.\arabic{algorithm}}
    \begin{algorithm}[!h]
      \caption{Acceleration for PGD}\label{algo:acceleration}\small
      \begin{algorithmic}[1]
        \REQUIRE The allocation vector $\bw^{(t-1)}$ and $\bw^{(t)}$, the gradient $\bg=\nabla_{\bw}h(\bw)$, the set of active coordinates $\cA$, the regularization parameter $\lambda$, and the step size $\iota$.
        \STATE Initialize $\bw^{(t+1)}=\zero_d$.
        \FOR{$j$ in $\cA$}
            \STATE $w_{j}=\max\{w_j^{(t)} - \iota(g_j +\lambda) ,0\}$.
            \IF{$(w_j-w_j^{(t)})\cdot(w_j^{(t)}-w_j^{(t-1)})<0$}
                \IF{$w_j-w_j^{(t)}>0$}
                    \STATE $w_j^{(t+1)}=\frac{1}{2}(w_j^{(t)}+\min\{w_j^{(t-1)},w_j\})$.
                \ELSE{}
                    \STATE $w_j^{(t+1)}=\frac{1}{2}(w_j^{(t)}+\max\{w_j^{(t-1)},w_j\})$.
                \ENDIF
            \ELSIF{$w_{j}>w_j^{(t)}>w_j^{(t-1)}$}
                \STATE $w_j^{(t+1)}= 2 w_j- w_j^{(t)}$.
            \ELSE
                \STATE $w_j^{(t+1)}=w_j$.
            \ENDIF
        \ENDFOR
        \ENSURE The updated allocation $\bw^{(t+1)}$.
      \end{algorithmic}
    \end{algorithm}

\clearpage
\section{Comparison with SCS}

    The general convex optimization solvers in \textsc{Cvxpy} \citep{agrawal2018rewriting} will first transform optimization problem \eqref{opt:l1_regularized} into standard form, and then solve the transformed problem instead.
    The transformed problem contains slack variables that lie in the intersect of a nonnegative cone and an exponential cone. 
    Available in \textsc{Cvxpy} \citep{agrawal2018rewriting}, Splitting Conic Solver (SCS) \citep{scs} is the only open-source solver that is capable to work with nonnegative cone and exponential cone constraints at the same time.
    Hence, we compare the accuracy and time consumption of our proposed algorithm with SCS on solving problem \eqref{opt:l1_regularized}.
    
    Since different methods use different stopping criteria, we adopt the following procedure to evaluate the accuracy and efficiency of the two methods.
    Firstly, we ran SCS with maximum number of iteration $1,000$ and stopping tolerance $10^{-16}$ and recorded the actual run time.
    Then our proposed algorithm was applied to the same data with the same stopping tolerance and was forced to stop if the time reached SCS' run time. By controlling the run time, we are able to compare with the accuracy of both methods.
    Secondly, since SCS applies Douglas-Rachford splitting and update the variables $\bw^{(t)}$ iteratively as our proposed algorithm, we ran both methods for 1,000 iterations with stopping tolerance $10^{-16}$ and compared the time consumption.
    We repeat this process on data with varying dimension $d$ sampled from NYSE dataset in Section \ref{subsec:NYSE}.
    The results are shown in Figure \ref{fig:cvx_comparison}.

    Note that the objective value with zero portfolio is zero and the objective values returned by two methods are all negative.
    Thus, a larger ratio will indicate better performance of the proposed method relative to SCS.
    We observed that the two methods are comparable in terms of accuracy for the logarithm utility, as the ratios of objective values are near one.
    However, there is much difference for exponential utility especially when the risk aversion parameter is small.
    In terms of execution time for the same number of iterations, we see that the proposed algorithm significantly outperforms SCS.
    Its dependence on dimension $d$ grows much slower than the one of SCS.
    This is because useless features are safely screened out and thus we can save much time by skipping these non-active features for the following iterations.
    Such benefit is more pronounced when the dimension $d$ is large.
    
    \begin{figure}[t]
        \centering
        \includegraphics[width=0.9\textwidth]{}
        \caption{Comparison of the proposed algorithm and SCS with respect to dimension $d$. The left panel shows the ratio of objective value between the proposed method and SCS within the same time; the right panel shows execution time of the two methods for 1,000 iterations.
        The shaded regions represent values within one standard deviation of the mean across 500 experiments (data generated from 10 random seeds and a path of 50 $\lambda$'s).}
        \label{fig:cvx_comparison}
    \end{figure}
    
    \ 

\section{Additional Experiment Results}
    \subsection{Portfolio without factors}
    
    \begin{table}[!h]\small
        \begin{tabularx}{\textwidth}{lPPPPP}
            \toprule
            \textbf{Method} & \textbf{Return} &  \textbf{Maximum Drawdown} & \textbf{Sharpe Ratio} & \textbf{Sortino Ratio} & \textbf{Avg. Num. of Assets}\\
            \midrule
            Benchmark &&&\\
            \hline
            \hline\addlinespace[0.2em]
            EW & 0.3844 & 0.6544 & 0.1665 & 0.2346 & 1640 \\
            GMV-P & -0.0563 & 0.3515 & -0.0810 & -0.1123 & 838 \\
            GMV-LS & 0.1247 & 0.4385 & 0.0392 & 0.0544 & 877 \\
            GMV-NLS & 0.1770 & 0.4903 & 0.0729 & 0.1021 & 820 \\
            MV-P & -0.8434 & 0.9471 & -0.1112 & -0.1508 & 302 \\
            MV-LS & -0.9057 & 0.9570 & -0.1649 & -0.2237 & 351 \\
            MV-NLS & -0.9447 & 0.9761 & -0.2030 & -0.2728 & 406 \\\midrule
            Our methods &&&\\
            \hline
            \hline\addlinespace[0.2em]
            LOG & 2.4004 & 0.5558 & 0.3839 & 0.5369 & 160 \\
            EXP-0.05 & 0.8602 & 0.5101 & 0.2365 & 0.3231 & 307 \\
            EXP-0.10 & 0.8946 & 0.5099 & 0.2424 & 0.3312 & 332 \\
            EXP-0.50 & 1.0029 & 0.5033 & 0.2620 & 0.3582 & 314 \\
            EXP-1.00 & 0.9931 & 0.5005 & 0.2606 & 0.3562 & 308 \\
            EXP-1.50 & 1.1702 & 0.4885 & 0.2912 & 0.3985 & 255 \\
            \bottomrule
        \end{tabularx}
        \caption{Out-of-sample results (with transaction fees) on Russell 2000 from 2005 to 2020.}\label{tab:russell2000_tranfee}
    \end{table}

    \subsection{Portfolio with factors}
    As discussed in Section 4.3, we turn attention to our methods with factor signals, which thereby augment the empirical performances. For simplicity and reproducibility, we use two simple factor signals and the methods are readily extended for incoporating multi-factors. Two factors are listed below:
    \begin{itemize}
        \item SR: Moving average statistics of the in-sample Sharpe ratios with window size of 26.
        \item RSI: Relative strength index, whose value at time $t$ is 
    formally defined through the following equations:
    \begin{align*}
        DIF_t &= close_t-open_t\\
        RSI_t &= \frac{\mathrm{ewm}\left(\{DIF_j\vee 0\}_{j\leq t},\frac{1}{24}\right)}{\mathrm{ewm}\left(\{|DIF_j|\}_{j\leq t},\frac{1}{24}\right)}
    \end{align*}
    where $\mathrm{ewm}(\cdot,\alpha)$ denote the exponentially weighted operation with smoothing factor $\alpha$.
    \end{itemize}
    
       \begin{table}[!h]\small
        \begin{tabularx}{\textwidth}{lPPPPP}
            \toprule
            \textbf{Method} & \textbf{Return} &  \textbf{Maximum Drawdown} & \textbf{Sharpe Ratio} & \textbf{Sortino Ratio} & \textbf{Avg. Num. of Assets}\\
            \midrule
            Benchmark &&&\\
            \hline
            \hline\addlinespace[0.2em]
            EW & 0.3844 & 0.6544 & 0.1665 & 0.2346 & 1640 \\
            GMV-P & -0.5015 & 0.7857 & -0.0651 & -0.0936 & 813 \\
            GMV-LS & -0.3340 & 0.7212 & -0.0183 & -0.0259 & 806 \\
            GMV-NLS & -0.3001 & 0.7155 & -0.0032 & -0.0045 & 657 \\
            MV-P & -0.0804 & 0.6540 & 0.0894 & 0.1278 & 243 \\
            MV-LS & 0.1105 & 0.5853 & 0.1344 & 0.1916 & 275 \\
            MV-NLS & -0.0850 & 0.5971 & 0.0966 & 0.1375 & 330 \\
            \midrule
            Our methods &&&\\
            \hline
            \hline\addlinespace[0.2em]
            LOG & 3.1726 & 0.5266 & 0.4417 & 0.6231 & 124 \\
            EXP-0.05 & 2.1080 & 0.5411 & 0.3662 & 0.5155 & 277 \\
            EXP-0.10 & 2.0547 & 0.5472 & 0.3618 & 0.5091 & 284 \\
            EXP-0.50 & 2.0014 & 0.5594 & 0.3583 & 0.5029 & 202 \\
            EXP-1.00 & 1.6655 & 0.5570 & 0.3268 & 0.4583 & 298 \\
            EXP-1.50 & 1.8670 & 0.5635 & 0.3459 & 0.4851 & 245 \\
            \bottomrule
        \end{tabularx}
        \caption{Out-of-sample results (with transaction fees) on Russell 2000 from 2005 to 2020  (using SR factor).}\label{tab:russell2000_tranfee_factor_SR}
    \end{table}

    \begin{table}[!ht]\small
            \begin{tabularx}{\textwidth}{lPPPPP}
                \toprule
                \textbf{Method} & \textbf{Return} &  \textbf{Maximum Drawdown} & \textbf{Sharpe Ratio} & \textbf{Sortino Ratio} & \textbf{Avg. Num. of Assets}\\
                \midrule
                Benchmark &&&\\
                \hline
                \hline\addlinespace[0.2em]
                EW & 3.1023 & 0.6125 & 0.4176 & 0.5948 & 1640 \\
                GMV-P & -0.6222 & 0.8962 & -0.0267 & -0.0394 & 674 \\
                GMV-LS & 0.0431 & 0.6830 & 0.1108 & 0.1656 & 768 \\
                GMV-NLS & -0.0241 & 0.6908 & 0.0961 & 0.1434 & 827 \\
                MV-P & 0.9973 & 0.5862 & 0.2644 & 0.4067 & 104 \\
                MV-LS & 2.2465 & 0.5843 & 0.3473 & 0.5363 & 29 \\
                MV-NLS & 1.2464 & 0.6185 & 0.2870 & 0.4370 & 2 \\ 
                \midrule
                Our methods &&&\\
                \hline
                \hline\addlinespace[0.2em]
                LOG & 3.1513 & 0.5204 & 0.4442 & 0.6253 & 109 \\
                EXP-0.05 & 3.0830 & 0.5074 & 0.4378 & 0.6147 & 432 \\
                EXP-0.10 & 2.6984 & 0.5160 & 0.4112 & 0.5761 & 417 \\
                EXP-0.50 & 2.8171 & 0.5148 & 0.4194 & 0.5879 & 377 \\
                EXP-1.00 & 3.1870 & 0.4988 & 0.4470 & 0.6274 & 305 \\
                EXP-1.50 & 3.1509 & 0.5000 & 0.4453 & 0.6251 & 207 \\
                \bottomrule
            \end{tabularx}
            \caption{Out-of-sample results (without transaction fees) on Russell 2000 from 2005 to 2020 (using RSI factor).}\label{tab:russell2000_factor_RSI}
        \end{table}

    \begin{table}[!h]\small
        \begin{tabularx}{\textwidth}{lPPPPP}
            \toprule
            \textbf{Method} & \textbf{Return} &  \textbf{Maximum Drawdown} & \textbf{Sharpe Ratio} & \textbf{Sortino Ratio} & \textbf{Avg. Num. of Assets}\\
            \midrule
            Benchmark &&&\\
            \hline
            \hline\addlinespace[0.2em]
            EW & 0.3844 & 0.6544 & 0.1665 & 0.2346 & 1640 \\
            GMV-P & -0.8261 & 0.9501 & -0.1614 & -0.2365 & 674 \\
            GMV-LS & -0.5611 & 0.8383 & -0.0745 & -0.1104 & 768 \\
            GMV-NLS & -0.6353 & 0.8589 & -0.1147 & -0.1696 & 827 \\
            MV-P & 0.5840 & 0.5977 & 0.2251 & 0.3457 & 104 \\
            MV-LS & 1.7918 & 0.5968 & 0.3221 & 0.4969 & 29 \\
            MV-NLS & 0.9993 & 0.6263 & 0.2676 & 0.4072 & 2 \\
            \midrule
            Our methods &&&\\
            \hline
            \hline\addlinespace[0.2em]
            LOG & 2.3978 & 0.5271 & 0.3912 & 0.5496 & 109 \\
            EXP-0.05 & 1.3793 & 0.5401 & 0.2962 & 0.4134 & 432 \\
            EXP-0.10 & 1.2245 & 0.5355 & 0.2785 & 0.3881 & 417 \\
            EXP-0.50 & 1.3409 & 0.5292 & 0.2917 & 0.4066 & 377 \\
            EXP-1.00 & 1.8099 & 0.5102 & 0.3411 & 0.4768 & 305 \\
            EXP-1.50 & 2.0923 & 0.5120 & 0.3669 & 0.5136 & 207 \\
            \bottomrule
        \end{tabularx}
        \caption{Out-of-sample results (with transaction fees) on Russell 2000 from 2005 to 2020 (using RSI factor).}\label{tab:russell2000_tranfee_factor_RSI}
    \end{table}

    To sum up, the proposed methods outperform the benchmarks with and without the transaction fees by incorporating factor signals. The performance of MV-type and GMV-type methods are not stable in the long run.
\end{document}